\def\HOME{/include}
\newtheorem{theorem}{Theorem}
\newtheorem{example}{Example}
\newtheorem{corollary}{Corollary}
\newtheorem{lemma}{Lemma}
\newtheorem{definition}{Definition}
\def\psfancypar#1#2{\begingroup\def\par{\endgraf\endgroup\lineskiplimit=0pt}
               \setbox2=\hbox{\large\sc #2}
               \newdimen\tmpht \tmpht \ht2 \advance\tmpht by \baselineskip
               \font\hhuge=Times-Bold at \tmpht
               \setbox1=\hbox{{\hhuge #1}}
               \count7=\tmpht \count8=\ht1
               \divide\count8 by 1000 \divide\count7 by \count8 
               \tmpht=.001\tmpht\multiply\tmpht by \count7 
               \font\hhuge=Times-Bold at \tmpht
               \setbox1=\hbox{{\hhuge #1}}
               \noindent
                \hangindent1.05\wd1
               \hangafter=-2 {\hskip-\hangindent
               \lower1\ht1\hbox{\raise1.0\ht2\copy1}%
                \kern-0\wd1}\copy2\lineskiplimit=-1000pt}
\newcommand{\beq}{\begin{equation}}
\newcommand{\eeq}{\end{equation}}
\newcommand{\bqa}{\begin{eqnarray}}
\newcommand{\eqa}{\end{eqnarray}}
\newcommand{\bqn}{\begin{eqnarray*}}
\newcommand{\eqn}{\end{eqnarray*}}
\newcommand{\nn}{\nonumber}
\newcommand{\be}{\begin{enumerate}}
\newcommand{\ee}{\end{enumerate}}
\newcommand{\bi}{\begin{itemize}}
\newcommand{\ei}{\end{itemize}}
\newcommand{\bd}{\begin{description}}
\newcommand{\ed}{\end{description}}
\newcommand{\ba}{\begin{array}}
\newcommand{\ea}{\end{array}}
\newcommand{\bde}{\begin{definition}}
\newcommand{\ede}{\end{definition}}
\newcommand{\bex}{\begin{example}}
\newcommand{\eex}{\end{example}}
\def\boxit#1{\vbox{\hrule\hbox{\vrule\kern3pt
        \vbox{\kern3pt#1\kern3pt}\kern3pt\vrule}\hrule}}
\def\reals{ { {\rm  I \kern-0.15em R }  } }
\def\complex{ {\,{{\rm C} \kern-0.50em \raise0.20ex {  |}}\, }}
\def\Lambdabf{\mbox{$ \bf \Lambda $}}
\def\0bf{{\bf 0}}
\def\1bf{{\bf 1}}
\def\2bf{{\bf 2}}
\def\3bf{{\bf 3}}
\def\4bf{{\bf 4}}
\def\5bf{{\bf 5}}
\def\6bf{{\bf 6}}
\def\7bf{{\bf 7}}
\def\8bf{{\bf 8}}
\def\9bf{{\bf 9}}
\def\xbf{{\bf x}}
\def\xbf{{\bf x}}
\def\Ibf{{\bf I}}
\def\Kbf{{\bf K}}
\def\Qbf{{\bf Q}}
\def\Rbf{{\bf R}}
\def\Sbf{{\bf S}}
\def\Ubf{{\bf U}}
\def\Xbf{{\bf X}}
\def\Emat{\mathcal{E}}
\def\Rxx{\Rbf_{\ssstyle X\kern-.1em X}}
\let\ssstyle=\scriptscriptstyle
\def\Kout{\setbox1=\hbox{\Huge\bf K}\hbox to
1.05\wd1{\hspace{.05\wd1}
\def\Sout{\setbox1=\hbox{\Huge\bf S}\hbox to 1.05\wd1{\hspace{.05\wd1}

\allowdisplaybreaks[2]
\begin{document}
\title{A New Outer Bound and the Noisy-Interference Sum-Rate
Capacity for Gaussian Interference Channels}
\author{Xiaohu Shang, Gerhard Kramer, and Biao Chen\thanks{X. Shang and B. Chen are with
Syracuse University, Department of EECS, 335 Link Hall, Syracuse,
NY 13244. Phone: (315)443-3332. Email: xshang@syr.ed and
bichen@ecs.syr.edu. G. Kramer is with Bell Labs, Alcatel-Lucent,
600 Mountain Ave. Murray Hill, NJ 07974-0636 Phone: (908)582-3964.
Email: gkr@research.bell-labs.com. }} \maketitle

\begin{abstract}
A new outer bound on the capacity region of Gaussian interference
channels is developed. The bound combines and improves existing
genie-aided methods and is shown to give the sum-rate capacity for
{\em noisy interference} as defined in this paper. Specifically,
it is shown that if the channel coefficients and power constraints
satisfy a simple condition then single-user detection at each
receiver is sum-rate optimal, i.e., treating the interference as
noise incurs no loss in performance. This is the first concrete
(finite signal-to-noise ratio) capacity result for the Gaussian
interference channel with weak to moderate interference.
Furthermore, for certain mixed (weak and strong) interference
scenarios, the new outer bounds give a corner point of the
capacity region.

\end{abstract}
 \noindent {\em Index terms} --- capacity, Gaussian noise, interference.
 \maketitle

\section{Introduction}

The interference channel (IC) models communication systems where
transmitters communicate with their respective receivers while
causing interference to all other receivers. For a two-user
Gaussian IC, the channel output can be written in the standard
form\cite{Carleial:78IT} \bqn
Y_1&=&X_1+\sqrt{a}X_2+Z_1,\\Y_2&=&\sqrt{b}X_1+X_2+Z_2,\eqn where
$\sqrt{a}$ and $\sqrt{b}$ are channel coefficients, $X_i$ and
$Y_i$ are the transmit and receive signals, and where the
user/channel input sequence $X_{i1},X_{i2},\cdots,X_{in}$ is
subject to the power constraint $\sum_{j=1}^n\Emat(X^2_{ij})\leq
nP_i$, $i=1,2$. The transmitted signals $X_1$ and $X_2$ are
statistically independent. The channel noises $Z_1$ and $Z_2$ are
possibly correlated unit variance Gaussian random variables, and
$(Z_1,Z_2)$ is statistically independent of $(X_1,X_2)$. In the
following, we denote this Gaussian IC as IC$(a,b,P_1,P_2)$.

The capacity region of an IC is defined as the closure of the set
of rate pairs $(R_1,R_2)$ for which both receivers can decode
their own messages with arbitrarily small positive error
probability. The capacity region of a Gaussian IC is known only
for three cases: \bi \item $a=0$, $b=0$. \item $a\geq 1$, $b\geq
1$: see \cite{Carleial:75IT,Sato:81IT,Han&Kobayashi:81IT}. \item
$a=0$, $b\geq 1$; or $a\geq 1$, $b=0$: see \cite{Costa:85IT} \ei
For the second case both receivers can decode the messages of both
transmitters. Thus this IC acts as two multiple access channels
(MACs), and the capacity region for the IC is the intersection of
the capacity region of the two MACs. However, when the
interference is weak or moderate, the capacity region is still
unknown. The best inner bound of the capacity region is obtained
in \cite{Han&Kobayashi:81IT} by using superposition coding and
joint decoding. A simplified form of the Han-Kobayashi region was
given by Chong-Motani-Garg \cite{Chong-etal:06IT_submission},
\cite{Kramer:06Zurich}. Various outer bounds have been developed
in
\cite{Sato:77IT,Carleial:83IT,Kramer:04IT,Etkin-etal:07IT_submission,Telatar&Tse:07ISIT}.
Sato's outer bound in \cite{Sato:77IT} is derived by allowing the
receivers to cooperate. Carleial's outer bound in
\cite{Carleial:83IT} is derived by decreasing the noise power.
Kramer in \cite{Kramer:04IT} presented two outer bounds. The first
is obtained by providing each receiver with just enough
information to decode both messages. The second outer bound is
obtained by reducing the IC to a degraded broadcast channel. Both
of these two bounds dominate the bounds by Sato and Carleial. The
recent outer bounds by Etkin, Wang, and Tse in
\cite{Etkin-etal:07IT_submission} are also based on genie-aided
methods, and they show that Han and Kobayashi's inner bound is
within one bit or a factor of two of the capacity region. This
result can also be established by the methods of Telatar and Tse
\cite{Telatar&Tse:07ISIT}. We remark that neither of the bounds of
\cite{Kramer:04IT} and \cite{Etkin-etal:07IT_submission} implies
each other. But as a rule of thumb, our numerical results show
that the bounds of \cite{Kramer:04IT} are better at low SNR while
those of \cite{Etkin-etal:07IT_submission} are better at high SNR.
The bounds of \cite{Telatar&Tse:07ISIT} are not amenable to
numerical evaluation since the optimal distributions of the
auxiliary random variables are unknown. None of the above outer
bounds is known to be tight for the general Gaussian IC.

In this paper, we present a new outer bound on the capacity region
of Gaussian ICs that improves on the bounds of
\cite{Kramer:04IT,Etkin-etal:07IT_submission}. The new bounds are
based on a genie-aided approach and a recently proposed extremal
inequality \cite{Liu&Viswanath:06IT}. Unlike the genie-aided
method used in \cite[Theorem 1]{Kramer:04IT}, neither receiver is
required to decode the messages from the other transmitter. Based
on this outer bound, we obtain new sum-rate capacity results
(Theorem \ref{thm:sumcapacity} and \ref{thm:sumcapacity_mixed})
for ICs satisfying some channel coefficient and power constraint
conditions. We show that the sum-rate capacity can be achieved by
treating the interference as noise when both the channel gain and
the power are weak. We say that such channels have {\em noisy
interference}. For this kind of noisy interference, the simple
single-user transmission and detection strategy is sum-rate
optimal. In Theorem \ref{thm:sumcapacity_mixed}, we show that for
ICs with $a>1, 0<b<1$ and satisfying another condition, the
sum-rate capacity is achieved by letting user $1$ fully recover
messages from user $2$ first before decoding its own message,
while user $2$ only recovers its own messages.

This paper is organized as follows. In Section II, we present a
new genie-aided outer bound and the resulting sum-rate capacity
for certain Gaussian ICs. We prove these results in Section III.
Numerical examples are given in Section IV, and Section V
concludes the paper.

\section{Main Results}

\subsection{General outer bound}
The following is a new outer bound on the capacity region of
Gaussian ICs.

\begin{theorem}
If the rates $(R_1,R_2)$ are achievable for IC$(a,b,P_1,P_2)$ with
$0<a<1,0<b<1$, they must satisfy the following constraints
(\ref{eq:constraint1})-(\ref{eq:constraint3}) for $\mu>0$,
$\frac{1+bP_1}{b+bP_1}\leq\eta_1\leq\frac{1}{b}$ and
$a\leq\eta_2\leq\frac{a+aP_2}{1+aP_2}$: \bqa R_1+\mu
R_2&\leq&\min_{\substack{\rho_i\in[0,1]\\\left(\sigma_1^2,\sigma_2^2\right)\in\Sigma}}\frac{1}{2}\log\left(1+\frac{P_1^*}{\sigma_1^2}\right)-\frac{1}{2}\log\left(aP_2^*+1-\rho_1^2\right)+\frac{1}{2}\log\left(1+P_1+aP_2-\frac{(P_1+\rho_1\sigma_1)^2}{P_1+\sigma_1^2}\right)\nn\\
&&\hspace{.2in}+\frac{\mu}{2}\log\left(1+\frac{P_2^*}{\sigma_2^2}\right)-\frac{\mu}{2}\log\left(bP_1^*+1-\rho_2^2\right)+\frac{\mu}{2}\log\left(1+P_2+bP_1-\frac{(P_2+\rho_2\sigma_2)^2}{P_2+\sigma_2^2}\right),\label{eq:constraint1}\\
R_1+\eta_1R_2&\leq&\frac{1}{2}\log\left(1+\frac{b\eta_1-1}{b-b\eta_1}\right)-\frac{\eta_1}{2}\log\left(1+\frac{b\eta_1-1}{1-\eta_1}\right)+\frac{\eta_1}{2}\log\left(1+bP_1+P_2\right),\label{eq:constraint2}\\
R_1+\eta_2R_2&\leq&\frac{1}{2}\log\left(1+P_1+aP_2\right)-\frac{1}{2}\log\left(1+\frac{a-\eta_2}{\eta_2-1}\right)+\frac{\eta_2}{2}\log\left(1+\frac{a-\eta_2}{a\eta_2-a}\right),\label{eq:constraint3}\eqa
where \bqa\Sigma&=&\left\{\begin{array}{ll}
\left\{\left(\sigma_1^2,\sigma_2^2\right)\quad\left.|\quad\sigma_1^2>0,\quad 0<\sigma_2^2\leq\frac{1-\rho_1^2}{a}\right.\right\},\quad\textrm{ if }\mu\geq 1, \\
\left\{\left(\sigma_1^2,\sigma_2^2\right)\quad\left.|\quad 0<\sigma_1^2\leq\frac{1-\rho_2^2}{b},\quad \sigma_2^2>0\right.\right\},\quad\textrm{ if }\mu<1, \\
 \end{array}\right.\eqa
and if $\mu\geq 1$ we have \bqa
P_1^*&=&\left\{\begin{array}{ll}P_1,&\quad 0<\sigma_1^2\leq\left(\left(\frac{1}{\mu}-1\right)P_1+\frac{1-\rho_2^2}{b\mu}\right)^+,\\
\frac{1-\rho_2^2-b\mu\sigma_1^2}{b\mu-b},&\quad
\left(\left(\frac{1}{\mu}-1\right)P_1+\frac{1-\rho_2^2}{b\mu}\right)^+<\sigma_1^2\leq\frac{1-\rho_2^2}{b\mu},\\
0,&\quad\sigma_1^2>\frac{1-\rho_2^2}{b\mu},\end{array}\right.\label{eq:P1muL}\\
P_2^*&=&P_2,\qquad\qquad\qquad
0<\sigma_2^2\leq\frac{1-\rho_1^2}{a}, \label{eq:P2muL}\eqa where
$(x)^+\triangleq\max\{x,0\}$, and if $0<\mu<1$ we have\bqa
P_1^*&=&P_1,\qquad\qquad\qquad\quad
0<\sigma_1^2\leq\frac{1-\rho_2^2}{b},\label{eq:P1muS}\\
P_2^*&=&\left\{\begin{array}{ll}P_2,&\quad 0<\sigma_2^2\leq\left(\left(\mu-1\right)P_2+\frac{\mu\left(1-\rho_1^2\right)}{a}\right)^+,\\
\frac{\mu\left(1-\rho_1^2\right)-a\sigma_2^2}{a-a\mu},&\quad
\left(\left(\mu-1\right)P_2+\frac{\mu\left(1-\rho_1^2\right)}{a}\right)^+<\sigma_2^2\leq\frac{\mu\left(1-\rho_1^2\right)}{a},\\
0,&\quad\sigma_2^2>\frac{\mu\left(1-\rho_1^2\right)}{a}.\end{array}\right.\label{eq:P2muS}
 \eqa\label{thm:region}
\end{theorem}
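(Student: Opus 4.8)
The three inequalities arise from two distinct genie constructions, so the plan is to prove them separately. The heart of the result is (\ref{eq:constraint1}). For it I would hand receiver~$i$ a genie signal $S_i^n=X_i^n+N_i^n$, where $N_i^n$ is i.i.d.\ zero-mean Gaussian with per-symbol variance $\sigma_i^2$ and correlation coefficient $\rho_i$ with the channel noise $Z_i^n$. Crucially---and this is the departure from \cite[Theorem 1]{Kramer:04IT}---$S_i^n$ reveals nothing beyond $X_i^n$, so neither receiver is asked to decode the other user's message. By Fano's inequality $nR_i\le I(X_i^n;Y_i^n,S_i^n)+n\epsilon_n$, and I would expand each term by the chain rule as $[h(S_i^n)-h(N_i^n)]+[h(Y_i^n\mid S_i^n)-h(Y_i^n\mid X_i^n,S_i^n)]$.

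Forming $n(R_1+\mu R_2)$, the positive conditional terms $h(Y_1^n\mid S_1^n)$ and $h(Y_2^n\mid S_2^n)$ I would bound by maximum entropy under the \emph{actual} power constraints, which produces the third and sixth logarithms of (\ref{eq:constraint1}) (those carrying $P_1,P_2$). The remaining four terms regroup by input: since knowing $X_1^n$ and $S_1^n$ is equivalent to knowing $X_1^n$ and $N_1^n$, one has $h(Y_1^n\mid X_1^n,S_1^n)=h(\sqrt a X_2^n+Z_1^n\mid N_1^n)$ with $Z_1^n$ of conditional variance $1-\rho_1^2$, so the terms containing $X_2^n$ combine into $\mu\,h(X_2^n+N_2^n)-h(X_2^n+\hat N^n)$ with $\mathrm{Var}(\hat N)=(1-\rho_1^2)/a$ (up to an additive constant in $\log a$), and symmetrically the $X_1^n$-terms combine into $h(X_1^n+N_1^n)-\mu\,h(X_1^n+\tilde N^n)$ with $\mathrm{Var}(\tilde N)=(1-\rho_2^2)/b$.

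Each group is now a weighted difference of the differential entropies of a single input perturbed by two Gaussian noises, exactly the form governed by the Liu--Viswanath extremal inequality \cite{Liu&Viswanath:06IT}: it certifies that a Gaussian input is optimal and, through its KKT conditions, pins down the maximizing power. For $\mu\ge1$ the inequality applied to the $X_1^n$-group yields the truncated ``water-filling'' profile $P_1^*$ of (\ref{eq:P1muL})---setting $f'(P)=0$ for $f(P)=\tfrac12\log(P+\sigma_1^2)-\tfrac{\mu}{2}\log(P+(1-\rho_2^2)/b)$ gives the middle branch, and $P_1^*\to0$ once $\sigma_1^2$ exceeds $(1-\rho_2^2)/(b\mu)$---while the $X_2^n$-group is monotone increasing in power and is maximized at $P_2^*=P_2$, provided the noise ordering $\sigma_2^2\le(1-\rho_1^2)/a$ holds. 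That ordering is precisely the definition of $\Sigma$ in the regime $\mu\ge1$ (and only one variance need be restricted, since wherever $P_1^*>0$ the inequality $\sigma_1^2\le(1-\rho_2^2)/(b\mu)\le(1-\rho_2^2)/b$ is automatic); the case $\mu<1$ is symmetric. As $(\rho_i,\sigma_i^2)$ parameterize a genie we are free to choose, every admissible choice is a valid outer bound, so I would finish by minimizing over $\rho_i\in[0,1]$ and $(\sigma_1^2,\sigma_2^2)\in\Sigma$.

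For the auxiliary inequalities (\ref{eq:constraint2}) and (\ref{eq:constraint3}) I would invoke the complementary genie that supplies one receiver with enough side information to decode \emph{both} codewords; that receiver's channel then behaves like a multiple-access link (the terms $\tfrac12\log(1+P_1+aP_2)$ and $\tfrac{\eta_1}{2}\log(1+bP_1+P_2)$ are its received-signal entropies), whose weighted rate $R_1+\eta R_2$ admits a closed-form single-letter bound, and optimizing the single free noise parameter traces out the admissible windows $\tfrac{1+bP_1}{b+bP_1}\le\eta_1\le\tfrac1b$ and $a\le\eta_2\le\tfrac{a+aP_2}{1+aP_2}$. The step I expect to be the real obstacle is the extremal-inequality argument for (\ref{eq:constraint1}): one must verify that the genie-induced scalar noise variances actually meet the covariance-ordering hypothesis of \cite{Liu&Viswanath:06IT} and then solve the constrained entropy-difference maximization exactly, and it is exactly to guarantee applicability (and to keep the optimal powers nonnegative) that the feasible set is cut down to $\Sigma$ and the powers take the piecewise form (\ref{eq:P1muL})--(\ref{eq:P2muS}).
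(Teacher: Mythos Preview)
Your treatment of (\ref{eq:constraint1}) is exactly the paper's: the genie $S_i^n=X_i^n+N_i^n$ with $\mathrm{Cov}(N_i,Z_i)=\rho_i\sigma_i$, the chain-rule split, the Gaussian maximum-entropy bound on $h(Y_i^n\mid S_i^n)$ via concavity in the per-letter powers, Lemma~\ref{lemma:lemma4} to strip the conditioning on $N_j^n$, and then the Liu--Viswanath extremal inequality (Corollaries~\ref{corollary:c1}--\ref{corollary:c2}) on each of the two resulting entropy differences to produce the piecewise $P_i^*$ and the feasible set $\Sigma$. Nothing to add there.

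Your sketch of (\ref{eq:constraint2})--(\ref{eq:constraint3}) is slightly off. The paper's genie does \emph{not} turn one receiver into a multiple-access decoder; it hands receiver~1 the input $X_2^n$ itself, which reduces the channel to a Z-IC (equivalently, a degraded broadcast channel---cf.\ Remark~4). Concretely,
\[
n(R_1+\eta_1 R_2)\le I(X_1^n;Y_1^n\mid X_2^n)+\eta_1 I(X_2^n;Y_2^n)
= h(X_1^n+Z_1^n)-\eta_1 h(\sqrt b\,X_1^n+Z_2^n)-h(Z_1^n)+\eta_1 h(Y_2^n),
\]
and the first two terms are again a Liu--Viswanath difference (now with \emph{fixed} noise variances $1$ and $1/b$, so there is no ``free noise parameter'' to optimize). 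The window $\tfrac{1+bP_1}{b+bP_1}\le\eta_1\le\tfrac1b$ arises because for $\eta_1$ outside this range Corollary~\ref{corollary:c1} gives $P_1^*\in\{0,P_1\}$, making the bound redundant. The MAC-looking term $\tfrac{\eta_1}{2}\log(1+bP_1+P_2)$ is simply the Gaussian bound on $h(Y_2^n)$ at the \emph{other} receiver, not a consequence of the genie-aided receiver decoding both messages.
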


Remark 1: The bounds (\ref{eq:constraint1})-(\ref{eq:constraint3})
are obtained by providing different genie-aided signals to the
receivers. There is overlap of the range of $\mu$, $\eta_1$, and
$\eta_2$, and none of the bounds uniformly dominates the other two
bounds. Which one of them is active depends on the channel
conditions and the rate pair.

Remark 2: Equations (\ref{eq:constraint2}) and
(\ref{eq:constraint3}) are outer bounds for the capacity region of
a Z-IC, and a Z-IC is equivalent to a degraded IC
\cite{Costa:85IT}. For such channels, it can be shown that
(\ref{eq:constraint2}) and (\ref{eq:constraint3}) are the same as
the outer bounds in \cite{Sato:78IT}. For
$0\leq\eta_1\leq\frac{1+bP_1}{b+bP_1}$ and
$\eta_2\geq\frac{a+aP_2}{1+aP_2}$, the bounds in
(\ref{eq:constraint2}) and (\ref{eq:constraint3}) are tight for a
Z-IC (or degraded IC) because there is no power sharing between
the transmitters. Consequently, $\frac{1+bP_1}{b+bP_1}$ and
$\frac{a+aP_2}{1+aP_2}$ are the negative slopes of the tangent
lines for the capacity region at the corner points.

Remark 3: The bounds in
(\ref{eq:constraint2})-(\ref{eq:constraint3}) turn out to be the
same as the bounds in \cite[Theorem 2]{Kramer:04IT}. We show this
by proving that (\ref{eq:constraint3}) is equivalent to \cite[page
584, (37)-(38)]{Kramer:04IT} but with equalities rather than
inequalities. Consider the rates\bqa
R_1&=&\frac{1}{2}\log\left(1+\frac{P_1^\prime}{P_2^\prime+1/a}\right)\label{eq:Kramer1}\\
R_2&=&\frac{1}{2}\log\left(1+P_2^\prime\right)\label{eq:Kramer2}\\
P_1^\prime+P_2^\prime&=&\frac{P_1}{a}+P_2\eqa for $0\leq
P_1^\prime\leq P_1$. We rewrite (\ref{eq:Kramer1}) and
(\ref{eq:Kramer2}) in the form of the weighted sum \bqa R_1+\alpha
R_2=\frac{1}{2}\log\left(1+\frac{P_1^\prime}{P_2^\prime+1/a}\right)+\frac{\alpha}{2}\log\left(1+P_2^\prime\right).\label{eq:KramerWeighted}\eqa
Observe that (\ref{eq:KramerWeighted}) represents a line with
slope $\alpha$ where \bqa \alpha&=&-\frac{\partial
R_1}{\partial R_2}\nn\\
&=&-\frac{\partial R_1}{\partial P_2^\prime}\left/\frac{\partial
R_2}{\partial P_2^\prime}\right.\nn\\
&=&-\frac{\partial
\log\left(1+\frac{P_1/a+P_2-P_2^\prime}{P_2^\prime+1/a}\right)}{\partial
P_2^\prime}\left/\frac{\partial\log\left(1+P_2^\prime\right)}{\partial
P_2^\prime}\right.\nn\\
&=&\frac{a+aP_2^\prime}{1+aP_2^\prime}.\label{eq:weight}\eqa We
thus obtain \bqa
P_2^\prime=\frac{a-\alpha}{a\alpha-a}.\label{eq:P1prime} \eqa
Substituting (\ref{eq:P1prime}) into (\ref{eq:KramerWeighted}), we
have \bqn R_1+\alpha
R_2=\frac{1}{2}\log\left(1+P_1+aP_2\right)-\frac{1}{2}\log\left(1+\frac{a-\alpha}{\alpha-1}\right)+\frac{\alpha}{2}\log\left(1+\frac{a-\alpha}{a\alpha-a}\right),\eqn
which is the same as (\ref{eq:constraint3}). The relation
$a\leq\alpha\leq\frac{a+aP_2}{1+aP_2}$ follows from
(\ref{eq:weight}) and $0\leq P_2^\prime\leq P_2$.

Remark 4: The bounds in \cite[Theorem 2]{Kramer:04IT} are obtained
by getting rid of one of the interference links to reduce the IC
into a Z interference channel (or Z-IC, see \cite{Costa:85IT}).
Next, the proof in \cite{Kramer:04IT} allowed the transmitters to
share their power, which further reduces the Z-IC into a degraded
broadcast channel. Then the capacity region of this degraded
broadcast channel is an outer bound for the capacity region of the
original IC. The bounds in (\ref{eq:constraint2}) and
(\ref{eq:constraint3}) are also obtained by reducing the IC to a
Z-IC. Although we do not explicitly allow the transmitters to
share their power, it is interesting that these bounds are
equivalent to the bounds in \cite[Theorem 2]{Kramer:04IT} with
power sharing. In fact, a careful examination of our new
derivation reveals that power sharing is implicitly assumed. For
example, for the term
$h\left(X_1^n+Z_1^n\right)-\eta_1h\left(\sqrt{b}X_1^n+Z_2^n\right)$
of (\ref{eq:BCbound}) below, user $1$ uses power
$P_1^*=\frac{b\eta_1-1}{b-b\eta_1}\leq P_1$ , while for the term
$\eta_1h\left(Y_2^n\right)$ user $1$ uses all the power $P_1$.
This is equivalent to letting user $1$ use the power $P_1^*$ for
both terms, and letting user $2$ use a power that exceeds $P_2$.
To see this, consider (\ref{eq:BCbound}) below and write \bqn
n(R_1+\eta_1
R_2)&\leq&\frac{n}{2}\log\left(P_1^*+1\right)-\frac{n\eta_1}{2}\log\left(bP_1^*+1\right)+\frac{n\eta_1}{2}\log\left(1+bP_1+P_2\right)+n\epsilon\\
&=&\frac{n}{2}\log\left(P_1^*+1\right)-\frac{n\eta_1}{2}\log\left(bP_1^*+1\right)+\frac{n\eta_1}{2}\log\left(1+bP_1^*+P_2+b(P_1-P_1^*)\right)+n\epsilon\\
&=&\frac{n}{2}\log\left(P_1^\prime+1\right)-\frac{n\eta_1}{2}\log\left(bP_1^\prime+1\right)+\frac{n\eta_1}{2}\log\left(1+bP_1^\prime+P_2^\prime\right)+n\epsilon,
 \eqn
where $P_1^\prime\triangleq P_1^*$, and $P_2^\prime\triangleq
P_2+b(P_1-P_1^*)$. Therefore, one can assume that user $2$ uses
extra power provided by user $1$.

Remark 5: Theorem \ref{thm:region} improves \cite[Theorem
3]{Etkin-etal:07IT_submission}. Specifically, for the three
sum-rate bounds of \cite[Theorem 3]{Etkin-etal:07IT_submission},
the first bound can be obtained from (\ref{eq:BCbound}) with
$P_1^*=P_1$ in (\ref{eq:starP1}). Therefore, the bound in
(\ref{eq:constraint2}) is tighter than the first sum-rate bound of
\cite[Theorem 3]{Etkin-etal:07IT_submission}. Similarly, the bound
in (\ref{eq:constraint3}) is tighter than the second sum-rate
bound of \cite[Theorem 3]{Etkin-etal:07IT_submission}. The third
sum-rate bound in \cite[Theorem 3]{Etkin-etal:07IT_submission} is
a special case of (\ref{eq:constraint1}) with
$\sigma_1^2=\frac{1}{b},\sigma_2^2=\frac{1}{a},\rho_1=\rho_2=0$.

Remark 6: Our outer bound is not always tighter than that of
\cite{Etkin-etal:07IT_submission} for all rate points. The reason
is that in \cite[last two equations of
(39)]{Etkin-etal:07IT_submission}, different genie-aided signals
are provided to the same receiver. Our outer bound can also be
improved in a similar and more general way by providing different
genie-aided signals to the receivers. Specifically the starting
point of the bound is \bqa n\left(R_1+\mu
R_2\right)\leq\sum_{i=1}^k\lambda_iI\left(X_1^n;Y_1^n,U_i\right)+\sum_{j=1}^m\mu_iI\left(X_2^n;Y_2^n,W_j\right)+n\epsilon,\label{eq:extension}\eqa
where
$\sum_{i=1}^k\lambda_i=1,\sum_{j=1}^m\mu_j=\mu,\lambda_i>0,\mu_j>0$.

\subsection{Sum-rate capacity for noisy interference}

The outer bound in Theorem \ref{thm:region} is in the form of an
optimization problem. Four parameters
$\rho_1,\rho_2,\sigma_1^2,\sigma_2^2$ need to be optimized for
different choices of the weights $\mu,\eta_1,\eta_2$. When
$\mu=1$, Theorem \ref{thm:region} leads directly to the following
sum-rate capacity result.

\begin{theorem}
For the IC$(a,b,P_1,P_2)$ satisfying \bqa
\sqrt{a}(bP_1+1)+\sqrt{b}(aP_2+1)\leq 1,\label{eq:power}\eqa
 the sum-rate capacity is \bqa
C=\frac{1}{2}\log\left(1+\frac{P_1}{1+aP_2}\right)+\frac{1}{2}\log\left(1+\frac{P_2}{1+bP_1}\right).\label{eq:sumcapacity}\eqa
\label{thm:sumcapacity}\end{theorem}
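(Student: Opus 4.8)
The plan is to establish the value $C$ in (\ref{eq:sumcapacity}) both as an achievable sum-rate and as a matching outer bound. Achievability is immediate: let each user transmit an independent Gaussian codeword at full power and let each receiver treat the interfering signal as additional Gaussian noise. Receiver $1$ then faces effective noise of variance $1+aP_2$ and achieves $R_1=\frac12\log\left(1+\frac{P_1}{1+aP_2}\right)$, and receiver $2$ achieves $R_2=\frac12\log\left(1+\frac{P_2}{1+bP_1}\right)$; their sum is exactly the right-hand side of (\ref{eq:sumcapacity}). So the work lies entirely in the converse, namely that no achievable pair can exceed this sum-rate.

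For the converse I would invoke Theorem \ref{thm:region} with $\mu=1$, which collapses (\ref{eq:constraint1}) into a pure sum-rate bound $R_1+R_2\leq\min(\cdots)$ over $\rho_1,\rho_2\in[0,1]$ and $(\sigma_1^2,\sigma_2^2)\in\Sigma$. Because the right-hand side is a minimum, it suffices to exhibit a single admissible parameter point at which the bound already equals $C$. Observe first that when $\mu=1$ the middle branch of (\ref{eq:P1muL}) degenerates, so any choice with $\sigma_1^2\leq\frac{1-\rho_2^2}{b}$ and $\sigma_2^2\leq\frac{1-\rho_1^2}{a}$ (the latter being precisely membership in $\Sigma$) forces $P_1^*=P_1$ and $P_2^*=P_2$, removing the $P_i^*$ bookkeeping.

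The central idea is to tune the genie parameters so that they reproduce the interference-as-noise channel. Concretely, I would take $\rho_1\sigma_1=1+aP_2$ and $\rho_2\sigma_2=1+bP_1$. Collecting the first three terms of (\ref{eq:constraint1}) into a single logarithm and clearing denominators, the difference between that expression and the target $\frac12\log\frac{1+P_1+aP_2}{1+aP_2}=\frac12\log\left(1+\frac{P_1}{1+aP_2}\right)$ reduces to the perfect square $P_1\bigl((1+aP_2)-\rho_1\sigma_1\bigr)^2$, which vanishes for exactly this choice; the three terms thus collapse to $R_1$ above. By the symmetry $a\leftrightarrow b$, $P_1\leftrightarrow P_2$, $\rho_1\leftrightarrow\rho_2$, $\sigma_1\leftrightarrow\sigma_2$, the last three terms collapse to the corresponding $R_2$, so the whole bound equals $C$.

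The remaining and main obstacle is feasibility: one must verify that parameters meeting $\rho_1\sigma_1=1+aP_2$, $\rho_2\sigma_2=1+bP_1$, $\rho_i\in(0,1]$, $\sigma_1^2\leq\frac{1-\rho_2^2}{b}$ and $\sigma_2^2\leq\frac{1-\rho_1^2}{a}$ genuinely exist, and that this is guaranteed precisely by (\ref{eq:power}). Writing $s=1+aP_2$, $t=1+bP_1$, $u=\sigma_1^2$, $v=\sigma_2^2$, the constraint $\rho_i\leq1$ reads $u\geq s^2$, $v\geq t^2$, while the power/$\Sigma$ constraints read $bu+t^2/v\leq1$ and $av+s^2/u\leq1$. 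Eliminating $v$ reduces the existence question to solvability of the quadratic inequality $bu^2-(1+bs^2-at^2)u+s^2\leq0$, whose discriminant is nonnegative iff $1+bs^2-at^2\geq2\sqrt{b}\,s$; and the hypothesis $\sqrt{a}\,t+\sqrt{b}\,s\leq1$ yields $at^2\leq(1-\sqrt{b}\,s)^2$, which is exactly that condition. The same hypothesis also gives $at^2+bs^2\leq1-2\sqrt{ab}\,st<1$, which I expect is what places the root interval correctly inside the admissible box $s^2\leq u<1/b$ (and, by the coupled inequality for $v$, likewise $t^2\leq v<1/a$). I anticipate that this last step, confirming the root interval meets the box rather than any conceptual difficulty, is where (\ref{eq:power}) is consumed in full.
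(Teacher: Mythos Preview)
Your proposal is correct and follows essentially the same route as the paper. Both invoke Theorem \ref{thm:region} at $\mu=1$, pick genie parameters so that (\ref{eq:constraint1}) collapses to the treat-interference-as-noise rate, and then reduce feasibility to a quadratic discriminant condition that is exactly (\ref{eq:power}). Your key relations $\rho_1\sigma_1=1+aP_2$, $\rho_2\sigma_2=1+bP_1$ together with saturating the $\Sigma$-constraints $a\sigma_2^2=1-\rho_1^2$, $b\sigma_1^2=1-\rho_2^2$ are precisely what the paper's closed-form choices (\ref{eq:condition1})--(\ref{eq:condition4}) encode; indeed, solving your system for $u=\sigma_1^2$ gives the quadratic $bu^2-(1+bs^2-at^2)u+s^2=0$, whose larger root is the paper's (\ref{eq:condition1}). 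Your discriminant computation is the paper's (\ref{eq:sigma1})--(\ref{eq:ineq3}), and your observation $at^2+bs^2\le 1-2\sqrt{ab}\,st<1$ is what the paper uses via (\ref{eq:rho1})--(\ref{eq:rho2}) to rule out the spurious branches. The only loose end you flag---that the relevant root actually lands in the admissible box---is handled automatically once you saturate both $\Sigma$-constraints with equality (as the paper does): then $u(1-av)=s^2>0$ and $v(1-bu)=t^2>0$ force $av<1$, $bu<1$, and hence $u\ge s^2$, $v\ge t^2$ for free.
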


Remark 7: The sum-rate capacity for a Z-IC with $a=0$, $0<b<1$ is
a special case of Theorem \ref{thm:sumcapacity} since
(\ref{eq:power}) is satisfied. The sum capacity is therefore given
by (\ref{eq:sumcapacity}).

Theorem \ref{thm:sumcapacity} follows directly from Theorem
\ref{thm:region} with $\mu=1$. It is remarkable that a genie-aided
bound is tight if (\ref{eq:power}) is satisfied since the genie
provides extra signals to the receivers without increasing the
rates. This situation is reminiscent of the recent capacity
results for vector Gaussian broadcast channels (see
\cite{Weingarten-etal:06IT}). Furthermore, the sum-rate capacity
(\ref{eq:sumcapacity}) is achieved by treating the interference as
noise. We therefore refer to channels satisfying (\ref{eq:power})
as ICs with {\em noisy interference}. Note that (\ref{eq:power})
involves both channel gains $a,b$ and both powers $P_1$ and $P_2$.
The constraint (\ref{eq:power}) implies that
\bqa\sqrt{a}+\sqrt{b}\leq 1.\label{eq:channelgain}\eqa Moreover,
as shown in Fig. \ref{fig:weakplot}, the powers $P_1$ and $P_2$
must be inside the triangle defined by: \bqa P_1&\geq& 0,\nn\\
P_2&\geq& 0,\nn\\b\sqrt{a}P_1+a\sqrt{b}P_2&\leq&
1-\sqrt{a}-\sqrt{b}.\label{eq:weakpower}\eqa These constraints can
be considered as a counterpart of the IC with very strong
interference \cite{Carleial:75IT} whose powers should be inside the rectangle defined in Fig. \ref{fig:strongplot}: \bqn &&a>1,b>1,\\
&&0\leq P_1\leq a-1,\\&& 0\leq P_2\leq b-1.\eqn

The ICs with noisy interference and ICs with very strong
interference are two extreme cases in terms of the decoding
strategy to achieve the sum-rate capacity. In the former case, the
sum-rate capacity is achieved by treating interference as noise,
while in the latter case, the interference is decoded before, or
together with, the intended messages.

 For symmetric Gaussian ICs
with $a=b$ and $P_1=P_2$, the conditions in (\ref{eq:channelgain})
and  (\ref{eq:weakpower})
become \bqa a=b&\leq&\frac{1}{4},\\
P_1=P_2=P&\leq&\frac{\sqrt{a}-2a}{2a^2}. \label{eq:symm_weak}\eqa
``Noisy interference" is therefore ``weaker" than ``weak
interference" as defined in \cite{Costa:85IT} and
\cite{Sason:04IT}, namely $a\leq\frac{\sqrt{1+2P}-1}{2P}$ or \bqa
P\leq\frac{1-2a}{a^2}.\label{eq:weak}\eqa Recall that
\cite{Sason:04IT} showed that for ``weak interference" satisfying
(\ref{eq:weak}), treating interference as noise achieves larger
sum rate than time-or frequency-division multiplexing (TDM/FDM),
and \cite{Costa:85IT} claimed that in ``weak interference" the
largest known achievable sum rate is achieved by treating the
interference as noise.

{\begin{figure}[htp] \centerline{\leavevmode \epsfxsize=5in
\epsfysize=3.4in \epsfbox{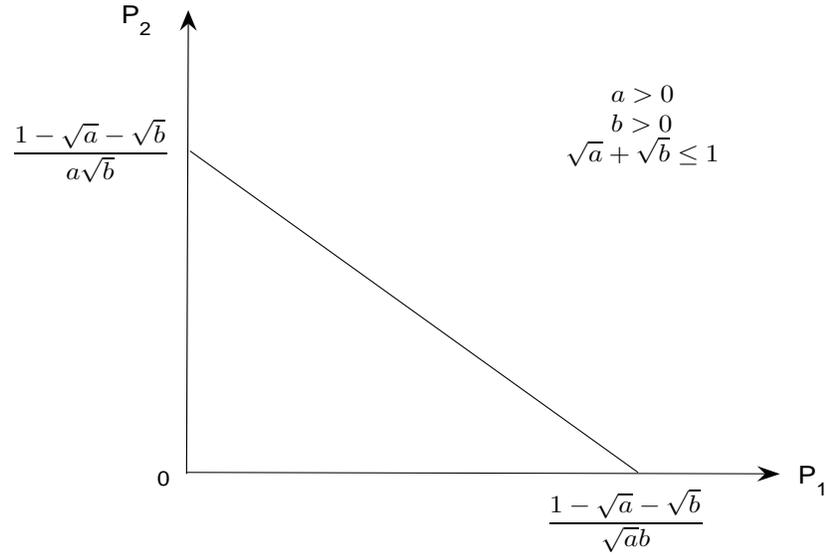}}\caption{Power region for
the IC with noisy interference.} \label{fig:weakplot}\end{figure}

\begin{figure}[htp]
\centerline{\leavevmode \epsfxsize=5in \epsfysize=3.4in
\epsfbox{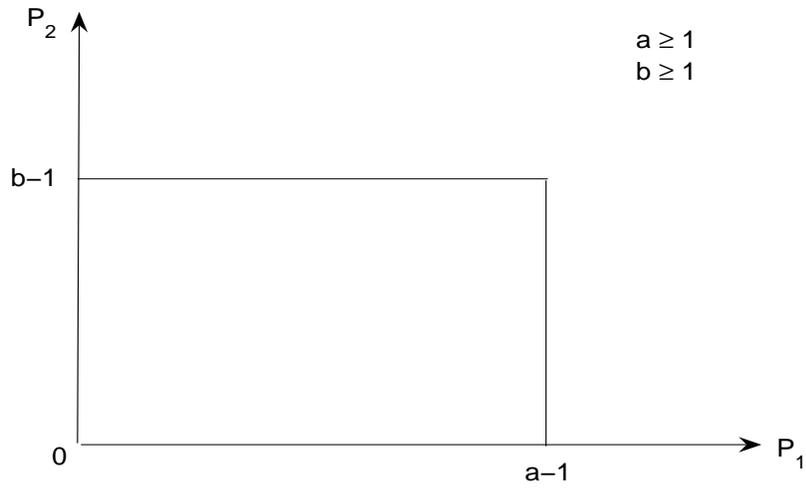}}\caption{Power region for the IC with
very strong interference.} \label{fig:strongplot}\end{figure} }

\subsection{Capacity region corner point}

\begin{theorem}
For an IC$(a,b,P_1,P_2)$ with $a>1$, $0<b<1$, the sum-rate
capacity is \bqa
C=\frac{1}{2}\log\left(1+P_1\right)+\frac{1}{2}\log\left(1+\frac{P_2}{1+bP_1}\right)\label{eq:sumcapacity_mixed}\eqa
when the following condition holds \bqa (1-ab)P_1\leq
a-1.\label{eq:constraint_mixed}\eqa A similar result follows by
swapping $a$ and $b$, and $P_1$ and $P_2$.
\label{thm:sumcapacity_mixed}\end{theorem}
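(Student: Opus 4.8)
The plan is to establish (\ref{eq:sumcapacity_mixed}) as the sum-rate capacity by showing that it is simultaneously achievable and an outer bound on $R_1+R_2$. For achievability I would use independent Gaussian inputs, $X_1\sim\mathcal N(0,P_1)$ and $X_2\sim\mathcal N(0,P_2)$, with the ``decode-and-cancel at receiver $1$, treat-interference-as-noise at receiver $2$'' strategy indicated by the theorem. User $2$ transmits at $R_2=\frac{1}{2}\log(1+\frac{P_2}{1+bP_1})$, which receiver $2$ attains by treating $\sqrt{b}X_1$ as Gaussian noise of variance $bP_1$. Receiver $1$ first decodes $W_2$ while treating $X_1^n+Z_1^n$ as noise, subtracts $\sqrt{a}X_2^n$, and then decodes $W_1$ from the clean observation $X_1^n+Z_1^n$ at $R_1=\frac{1}{2}\log(1+P_1)$. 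The only compatibility requirement is that receiver $1$ can in fact decode $W_2$, i.e. $R_2\leq\frac{1}{2}\log(1+\frac{aP_2}{1+P_1})$; rearranging $\frac{P_2}{1+bP_1}\leq\frac{aP_2}{1+P_1}$ gives exactly (\ref{eq:constraint_mixed}), so the condition $(1-ab)P_1\leq a-1$ is precisely what makes this scheme succeed, with achieved sum rate $R_1+R_2=C$.

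For the converse, starting from Fano's inequality and handing receiver $1$ the genie $X_2^n$, I would write
\bqn
n(R_1+R_2)&\leq& I(X_1^n;Y_1^n,X_2^n)+I(X_2^n;Y_2^n)+n\epsilon_n\\
&=&\left[h(X_1^n+Z_1^n)-h(Z_1^n)\right]+\left[h(Y_2^n)-h\left(\sqrt{b}X_1^n+Z_2^n\right)\right]+n\epsilon_n.
\eqn
The crux is to control the difference $h(X_1^n+Z_1^n)-h(\sqrt{b}X_1^n+Z_2^n)$, in which the same $X_1^n$ appears with opposite signs, so one cannot Gaussianize the two terms separately. I would prove $h(X_1^n+Z_1^n)-h(\sqrt{b}X_1^n+Z_2^n)\leq\frac{n}{2}\log\frac{1+P_1}{1+bP_1}$ by the entropy power inequality: writing $Z_2^n\stackrel{d}{=}\sqrt{b}Z_a^n+Z_b^n$ with independent $Z_a^n\sim\mathcal N(0,I)$ and $Z_b^n\sim\mathcal N(0,(1-b)I)$ gives $\sqrt{b}X_1^n+Z_2^n\stackrel{d}{=}\sqrt{b}(X_1^n+Z_a^n)+Z_b^n$, and applying the entropy power inequality to these two independent summands yields $h(\sqrt{b}X_1^n+Z_2^n)\geq\frac{n}{2}\log(2\pi e(bA+1-b))$, where $A\triangleq\frac{1}{2\pi e}\exp(\frac{2}{n}h(X_1^n+Z_1^n))$ is the entropy power of $X_1^n+Z_1^n$. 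The difference is then at most $\frac{n}{2}\log\frac{A}{bA+1-b}$, which is increasing in $A$ since $b<1$, and $A\leq 1+P_1$ by the maximum-entropy bound, so evaluating at $A=1+P_1$ gives the claim.

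Combining this with $h(Y_2^n)\leq\frac{n}{2}\log(2\pi e(1+bP_1+P_2))$ and $h(Z_1^n)=\frac{n}{2}\log(2\pi e)$ collapses the right-hand side to $\frac{n}{2}\log(1+P_1)+\frac{n}{2}\log(1+\frac{P_2}{1+bP_1})+n\epsilon_n$; dividing by $n$ and letting $n\to\infty$ finishes the converse. I expect this argument to need only the entropy power inequality rather than the extremal inequality underlying Theorem \ref{thm:region}, with the channel hypotheses entering only through achievability ($a>1$ making strong-interference decoding at receiver $1$ sensible, and (\ref{eq:constraint_mixed}) guaranteeing it succeeds at the required rate). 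The main obstacle is the difference inequality above: the work lies in handling the coupled $X_1^n$ terms in one stroke via the entropy power inequality, and the monotonicity-in-$A$ check is the step that must be carried out with care.
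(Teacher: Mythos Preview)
Your proof is correct and follows the same overall skeleton as the paper: the same genie ($X_2^n$ handed to receiver~1), the same entropy decomposition
\[
n(R_1+R_2)\le h(X_1^n+Z_1^n)-h(Z_1^n)+h(Y_2^n)-h(\sqrt{b}X_1^n+Z_2^n)+n\epsilon_n,
\]
and the same achievability scheme (successive decoding at receiver~1, treating interference as noise at receiver~2, with (\ref{eq:constraint_mixed}) arising exactly as the feasibility condition $\frac{P_2}{1+bP_1}\le\frac{aP_2}{1+P_1}$).

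The one genuine difference is how you control the coupled term $h(X_1^n+Z_1^n)-h(\sqrt{b}X_1^n+Z_2^n)$. The paper invokes Corollaries~\ref{corollary:c1}--\ref{corollary:c2}, i.e.\ the Liu--Viswanath extremal inequality specialized to $\eta_1=1$, to conclude that Gaussian $X_1^n$ with full power is optimal. You instead split $Z_2^n\stackrel{d}{=}\sqrt{b}Z_a^n+Z_b^n$ and apply the entropy power inequality, then use monotonicity of $A\mapsto A/(bA+1-b)$ together with $A\le 1+P_1$. Both arguments land on the same bound $\frac{n}{2}\log\frac{1+P_1}{1+bP_1}$. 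Your EPI route is more elementary and self-contained for this particular theorem (indeed it is the classical Z-IC converse argument of Sato/Costa), whereas the paper's extremal-inequality route is what is actually needed for the general weighted bounds (\ref{eq:constraint1})--(\ref{eq:constraint3}) in Theorem~\ref{thm:region}; at the equal-weight point $\eta_1=1$ the extremal inequality collapses to exactly what EPI already gives.
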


Under the constraint (\ref{eq:constraint_mixed}), we have the
following inequality: \bqa
\frac{1}{2}\log\left(1+\frac{P_2}{1+bP_1}\right)\leq\frac{1}{2}\log\left(1+\frac{aP_2}{1+P_1}\right)\label{eq:icmixed}.\eqa
Therefore, the sum-rate capacity is achieved by a simple scheme:
user $1$ transmits at the maximum rate and user $2$ transmits at
the rate that both receivers can decode its message with
single-user detection. Observe further that this rate pair permits
$R_2=\frac{1}{2}\log\left(1+\frac{aP_2}{1+P_1}\right)$ when $R_1$
reaches its maximum. Such a rate constraint was considered in
\cite[Theorem 1]{Costa:85IT} which established a corner point of
the capacity region. However it was pointed out in
\cite{Sason:04IT} that the proof in \cite{Costa:85IT} was flawed.
Theorem \ref{thm:sumcapacity_mixed} shows that the rate pair of
\cite{Sason:04IT} is in fact a corner point of the capacity region
when $a>1, 0<b<1$ and (\ref{eq:constraint_mixed}) is satisfied,
and this rate pair achieves the sum-rate capacity.

 The sum-rate
capacity of the degraded IC $(ab=1,0<b<1)$ is a special case of
Theorem \ref{thm:sumcapacity_mixed}. Besides this example, there
are two other kinds of ICs to which Theorem
\ref{thm:sumcapacity_mixed} applies. The first case is $ab>1$. In
this case, $P_1$ can be any positive value. The second case is
$ab<1$ and $P_1\leq\frac{a-1}{1-ab}$. For both cases, the signals
from user $2$ can be decoded first at both receivers.

\subsection{State of the Art}
We reiterate that both Theorems \ref{thm:sumcapacity} and
\ref{thm:sumcapacity_mixed} are direct results of Theorem
\ref{thm:region}, and Theorem \ref{thm:region} is derived by
having a genie provide extra information to the receivers. We
summarize the sum-rate capacity for Gaussian ICs from Theorems
\ref{thm:sumcapacity} and \ref{thm:sumcapacity_mixed} and previous
results in \cite{Carleial:75IT,Han&Kobayashi:81IT,Sato:81IT}. In
Fig. \ref{fig:ab}, four curves $ab=1$, $a=1$, $b=1$, and
$\sqrt{a}+\sqrt{b}\leq 1$ divide channel gain plane into $7$
regimes. The sum-rate capacity for each regime under certain power
constraints is shown in Tab. \ref{tab:ab}.

\section{Proofs of the Main Results}

We introduce some notation. We write vectors and matrices by using
a bold font (e.g., $\Xbf$ and $\Sbf$). When useful we also write
vectors with length $n$ using the notation $X^n$. The $i$th entry
of the vector $\Xbf$ (or $X^n$) is denoted as $X_i$. Random
variables are written as uppercase letters (e.g., $X_i$) and their
realizations as the corresponding lowercase letter (e.g., $x_i$).
We usually write probability densities and distributions as
$p(\xbf)$ if the argument of $p(\cdot)$ is a lowercase version of
the random variable corresponding to this density or distribution.
The notation $h(\Xbf)$ and $\textrm{Cov}(\Xbf)$ refers to the
respective differential entropy and covariance matrix of $\Xbf$.
The notation of $U|V=v$ and $U|V$ denotes the random variable $U$
conditioned on the event $V=v$ and the random variable $V$,
respectively.

\begin{figure}[htp]
\centerline{\leavevmode \epsfxsize=5in \epsfysize=4.5in
\epsfbox{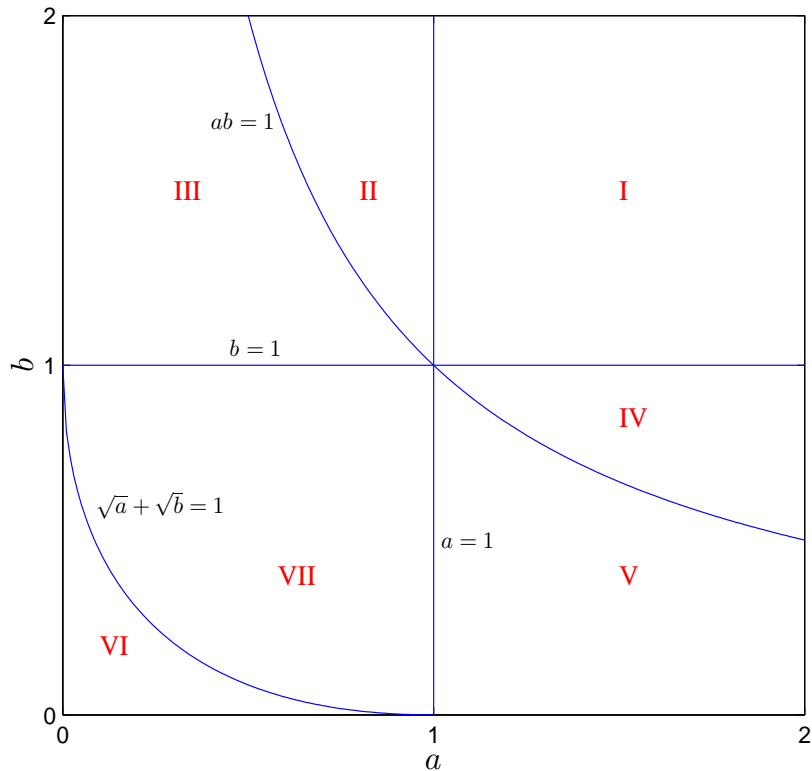}}\caption{Gaussian IC channel coefficient regimes
for Tab. \ref{tab:ab}} \label{fig:ab}\end{figure}

\begin{table}[htp]\caption{\label{tab:ab} Sum-rate capacity.}
{\small \centerline{\begin{tabular}{|c|c|c|c|}
  \hline
   &$(a,b)$ & $(P_1,P_2)$ & sum-rate capacity \\
  \hline
   I & $a\geq 1, b\geq 1$& $P_1>0,P_2>0$ & $\min\left\{\begin{array}{c}
    \frac{1}{2}\log(1+P_1)+\frac{1}{2}\log(1+P_2) \\
    \frac{1}{2}\log(1+P_1+aP_2) \\
    \frac{1}{2}\log(1+bP_1+P_2) \\
  \end{array}\right\}$ \\
   \hline
II & $ab\geq 1,a\leq 1$& $P_1>0,P_2>0$ &
$\frac{1}{2}\log\left(1+\frac{P_1}{1+aP_2}\right)+\frac{1}{2}\log(1+P_2)$\\
  \hline
 III & $ab\leq 1,b\geq 1$& $P_1>0,P_2\leq\frac{b-1}{1-ab}$ & \textrm{same as above}\\
  \hline
 IV & $ab\geq 1,b\leq 1$& $P_1>0,P_2>0$ &
$\frac{1}{2}\log(1+P_1)+\frac{1}{2}\log\left(1+\frac{P_2}{1+bP_1}\right)$\\
  \hline
   V & $ab\leq 1,a\geq 1$& $P_1\leq\frac{a-1}{1-ab},P_2>0$ &\textrm{same as above}\\
  \hline
     VI & $\sqrt{a}+\sqrt{b}\leq 1$ & $\sqrt{a}(1+bP_1)+\sqrt{b}(1+aP_2)\leq 1$ & $\frac{1}{2}\log\left(1+\frac{P_1}{1+aP_2}\right)+\frac{1}{2}\log\left(1+\frac{P_2}{1+bP_1}\right)$\\
  \hline
   VII & $\sqrt{a}+\sqrt{b}>1,a<1,b<1$& $P_1>0,P_2>0$ &\textrm{unknown}\\
  \hline
\end{tabular}}}
\end{table}

The proof utilizes the extremal inequalities introduced in
\cite{Liu&Viswanath:06IT}. We present them below for completeness.

\begin{lemma}
\cite[Theorem 1]{Liu&Viswanath:06IT} For any $\mu\geq 1$ and any
positive semi-definite $\Sbf$, a Gaussian $\Xbf$ is an optimal
solution of the following optimization problem:

\bqn \max_{p(\xbf)}&&h\left(\Xbf+\Ubf_1\right)-\mu
h\left(\Xbf+\Ubf_2\right)\\
\textrm{subject to}&&\textrm{Cov}(\Xbf)\preceq\Sbf,\eqn where
$\Ubf_1$ and $\Ubf_2$ are Gaussian vectors with strictly positive
definite covariance matrices $\Kbf_1$ and $\Kbf_2$, respectively,
and the maximization is over all $\Xbf$ independent of $\Ubf_1$
and $\Ubf_2$. \label{lemma:lemma1}
\end{lemma}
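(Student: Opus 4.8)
The plan is to treat this in two stages: first reduce the problem to optimizing over Gaussian inputs and extract the optimality conditions, and then prove that no non-Gaussian $\Xbf$ can beat the resulting Gaussian candidate. For the reduction, I restrict attention to Gaussian $\Xbf$ with covariance $\Kbf$, so that the objective becomes $\frac{1}{2}\log|\Kbf+\Kbf_1|-\frac{\mu}{2}\log|\Kbf+\Kbf_2|$, a continuous function on the compact convex set $\{\Kbf:0\preceq\Kbf\preceq\Sbf\}$; hence a maximizer $\Kbf^\ast$ exists. Writing the KKT conditions, with positive semidefinite multipliers $\mathbf{M}_1$ for the constraint $\Kbf\preceq\Sbf$ and $\mathbf{M}_2$ for $\Kbf\succeq 0$, gives the stationarity identity \[(\Kbf^\ast+\Kbf_1)^{-1}-\mu(\Kbf^\ast+\Kbf_2)^{-1}=\mathbf{M}_1-\mathbf{M}_2,\] together with complementary slackness $\mathbf{M}_1(\Sbf-\Kbf^\ast)=0$ and $\mathbf{M}_2\Kbf^\ast=0$. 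Let $\Xbf^\ast$ denote the Gaussian input with covariance $\Kbf^\ast$.

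The substance of the argument is to establish that $h(\Xbf+\Ubf_1)-\mu h(\Xbf+\Ubf_2)\leq h(\Xbf^\ast+\Ubf_1)-\mu h(\Xbf^\ast+\Ubf_2)$ for every admissible $\Xbf$. The obstruction is that, because $\mu\geq 1$, this functional is a concave term minus a scaled convex term and is therefore \emph{not} concave in $p(\xbf)$; the familiar route of combining concavity with the entropy power inequality (EPI) breaks down. To circumvent this I would run a perturbation argument in the spirit of de Bruijn's identity: flow each output through a Gaussian smoothing channel, express the derivative of the objective along the flow in terms of the Fisher-information matrices $\mathbf{J}(\Xbf+\Ubf_1)$ and $\mathbf{J}(\Xbf+\Ubf_2)$, and invoke the matrix Fisher-information inequality together with the Cramér--Rao bound $\mathbf{J}(\cdot)\succeq\textrm{Cov}(\cdot)^{-1}$ (tight iff Gaussian). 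The KKT identity is exactly what forces these Fisher-information terms to combine with matching signs, so that the derivative is sign-definite and vanishes precisely at the Gaussian $\Xbf^\ast$.

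An equivalent and arguably cleaner realization of the same idea is a channel-enhancement step: replace $\Kbf_1$ by an enhanced $\Kbf_1'\preceq\Kbf_1$ chosen so that the two gradient matrices $(\Kbf^\ast+\Kbf_1')^{-1}$ and $\mu(\Kbf^\ast+\Kbf_2)^{-1}$ become aligned on the relevant subspace. One then checks that (i) enhancement can only increase the objective, so that an upper bound for the enhanced problem bounds the original, and (ii) for the aligned enhanced problem the gradient matching permits a direct application of the vector EPI, yielding Gaussian optimality at the same value attained by $\Xbf^\ast$. In either formulation the main obstacle is the non-concavity for $\mu>1$, and the crux is the alignment of gradient matrices delivered by the KKT condition; the special case $\mu=1$ reduces to the classical worst-additive-noise/saddle-point result and serves as a consistency check on the signs.
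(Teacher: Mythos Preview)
The paper does not prove this lemma at all: it is quoted verbatim from \cite{Liu&Viswanath:06IT} as an imported tool (``We present them below for completeness''), and the only proof appearing nearby is the short computation for Corollaries~\ref{corollary:c1} and~\ref{corollary:c2}, which \emph{assumes} Lemma~\ref{lemma:lemma1} and merely optimizes the resulting Gaussian expression over the eigenvalues of $\Sbf$. So there is no ``paper's own proof'' to compare against.

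That said, your sketch is essentially the proof in the cited source. Liu and Viswanath first fix a Gaussian maximizer $\Kbf^\ast$ via the KKT system you wrote down, and then close the gap to arbitrary inputs by a monotone-path (perturbation) argument: add independent Gaussian noise of growing variance, differentiate using de~Bruijn's identity to turn entropies into Fisher-information matrices, and use the matrix Fisher-information inequality together with the KKT alignment to show the derivative has a definite sign, with equality for the Gaussian $\Xbf^\ast$. Your alternative ``channel enhancement'' route is the Weingarten--Steinberg--Shamai technique, which Liu--Viswanath also discuss as an equivalent viewpoint. Both paragraphs of your proposal are on target; the only caveat is that the informal phrase ``flow each output through a Gaussian smoothing channel'' hides the real work, namely constructing the interpolation so that the Fisher-information bounds combine with the KKT multipliers $\mathbf{M}_1,\mathbf{M}_2$ (and their complementary-slackness relations) to produce a sign-definite derivative for \emph{all} $t$, not just at the endpoint. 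If you flesh that step out you recover the original proof; the present paper simply takes the result as given.
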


\begin{lemma}
\cite[Corollary 4]{Liu&Viswanath:06IT} For any real number $\mu$
and any positive semi-definite $\Sbf$, a Gaussian $\Xbf$ is an
optimal solution of the following optimization problem: \bqn
\max_{p(\xbf)}&&h\left(\Xbf+\Ubf_1\right)-\mu
h\left(\Xbf+\Ubf_1+\Ubf\right)\\
\textrm{subject to}&&\textrm{Cov}(\Xbf)\preceq\Sbf,\eqn where
$\Ubf_1$ and $\Ubf$ are two independent Gaussian vectors with
strictly positive definite covariance matrices $\Kbf_1$ and
$\Kbf$, respectively, and the maximization is over all $\Xbf$
independent of $\Ubf_1$ and $\Ubf_2$. \label{lemma:lemma2}
\end{lemma}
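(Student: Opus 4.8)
The plan is to obtain Lemma~\ref{lemma:lemma2} from Lemma~\ref{lemma:lemma1} by splitting on the value of $\mu$, since Lemma~\ref{lemma:lemma1} only covers $\mu\geq 1$ while here $\mu$ is an arbitrary real number. The feature that buys the extension to $\mu<1$ is the degraded structure of the two noises: the second is the first plus an extra independent Gaussian $\Ubf$, so that $\Xbf\to\Xbf+\Ubf_1\to\Xbf+\Ubf_1+\Ubf$ forms a Markov chain. Throughout, let $n$ be the common dimension of the vectors.

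For $\mu\geq 1$ I would set $\Ubf_2\triangleq\Ubf_1+\Ubf$. Since $\Ubf_1$ and $\Ubf$ are independent Gaussian vectors, $\Ubf_2$ is Gaussian with covariance $\Kbf_1+\Kbf\succ 0$, and $\Xbf$ remains independent of $\Ubf_2$. The objective then reads $h(\Xbf+\Ubf_1)-\mu h(\Xbf+\Ubf_2)$, which is exactly the problem of Lemma~\ref{lemma:lemma1}, so a Gaussian $\Xbf$ is optimal. For $\mu\leq 0$ the coefficient $-\mu$ is nonnegative, so the objective is a nonnegatively weighted sum of $h(\Xbf+\Ubf_1)$ and $h(\Xbf+\Ubf_1+\Ubf)$. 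For each fixed $\textrm{Cov}(\Xbf)=\mathbf{Q}\preceq\Sbf$, a Gaussian $\Xbf$ of covariance $\mathbf{Q}$ makes both outputs Gaussian and simultaneously maximizes each differential entropy by the maximum-entropy property; optimizing over $\mathbf{Q}\preceq\Sbf$ then shows a Gaussian input is optimal.

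The remaining and genuinely hard case is $0<\mu<1$, where Lemma~\ref{lemma:lemma1} does not apply and the two terms pull in opposite directions. Here I would exploit the degradedness through the entropy power inequality
\bqn e^{\frac{2}{n}h(\Xbf+\Ubf_1+\Ubf)}\geq e^{\frac{2}{n}h(\Xbf+\Ubf_1)}+e^{\frac{2}{n}h(\Ubf)}.\eqn
Using this lower bound on the subtracted term and setting $\tau\triangleq e^{\frac{2}{n}h(\Xbf+\Ubf_1)}$ and $c\triangleq e^{\frac{2}{n}h(\Ubf)}$, the objective is bounded above by $\frac{n}{2}\big(\log\tau-\mu\log(\tau+c)\big)$, whose derivative $\frac{1}{\tau}-\frac{\mu}{\tau+c}$ is strictly positive precisely because $0<\mu<1$. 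The objective is therefore dominated by a strictly increasing function of $h(\Xbf+\Ubf_1)$, and $h(\Xbf+\Ubf_1)$ is maximized over $\textrm{Cov}(\Xbf)\preceq\Sbf$ by a Gaussian $\Xbf$.

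The main obstacle is that this last argument is conclusive only in the scalar case. In the vector case, equality in the entropy power inequality demands that $\textrm{Cov}(\Xbf+\Ubf_1)$ and $\Kbf$ be proportional, a condition that the entropy-maximizing Gaussian generally fails to meet, so the bound above is loose. I would close this gap with the channel-enhancement technique underlying Lemma~\ref{lemma:lemma1}: first characterize the optimal Gaussian covariance $\mathbf{Q}^{*}$ through the KKT conditions of the finite-dimensional Gaussian problem, then replace $\Kbf$ by an enhanced covariance $\tilde{\Kbf}$ chosen so that these KKT conditions coincide with the equality condition of a matched entropy power inequality, while the enhanced problem still upper-bounds the original and agrees with it at $\mathbf{Q}^{*}$. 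Proving that such an enhancement exists --- equivalently, aligning the KKT multipliers with the equality condition of the inequality --- is the technical heart of the argument.
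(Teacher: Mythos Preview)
The paper does not prove Lemma~\ref{lemma:lemma2} at all: it is quoted verbatim as \cite[Corollary~4]{Liu&Viswanath:06IT} and used as a black box. So there is no ``paper's own proof'' to compare against; the intended comparison collapses to whether your sketch matches the Liu--Viswanath argument the paper defers to.

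On that score your outline is in the right direction. The reductions for $\mu\geq 1$ (to Lemma~\ref{lemma:lemma1}) and for $\mu\leq 0$ (termwise maximum-entropy) are clean and correct. You also correctly isolate $0<\mu<1$ as the nontrivial regime and correctly identify that the naive entropy-power-inequality argument is conclusive only in the scalar case, because equality in the EPI forces $\mathrm{Cov}(\Xbf+\Ubf_1)$ and $\Kbf$ to be proportional, which the entropy-maximizing Gaussian generally is not. Your final paragraph --- use the KKT conditions of the Gaussian problem to construct an enhanced noise covariance that both dominates the original objective and makes the EPI tight at the optimizer --- is exactly the channel-enhancement mechanism of Liu--Viswanath; what remains is the actual construction of the enhancement and the verification that it preserves the optimizer while upper-bounding the objective, which is the substance of their proof and which you have only stated as an intention rather than carried out.
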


For example, consider the following optimization problem
 \bqa
\max_{p\left(\xbf\right)}&&h\left(\Xbf+\Ubf_1\right)-\mu
h\left(\Xbf+\Ubf_2\right)\nn\\
\textrm{subject to}&&\frac{1}{n}\textrm{tr}\left(\Sbf\right)\leq
P,\quad
\Sbf=\Emat\left(\Xbf\Xbf^{T}\right),\label{eq:optLemma}\eqa and
suppose that $\Sbf^*$ is the optimal covariance matrix for $\Xbf$.
When $\mu\geq 1$, the problem (\ref{eq:optLemma}) is equivalent to
the problem of Lemma \ref{lemma:lemma1} with $\Sbf$ replaced by
$\Sbf^*$. Similarly, when $\mu<1$ the problem (\ref{eq:optLemma})
is equivalent to the problem of Lemma \ref{lemma:lemma2} with
$\Sbf$ replaced by $\Sbf^*$ and $\Ubf_2=\Ubf_1+\Ubf$. Therefore a
Gaussian $\Xbf$ is optimal for problem (\ref{eq:optLemma}) in both
cases. We further have the following two simple optimization
results.

\begin{corollary}
The optimization problem of Lemma \ref{lemma:lemma1} with the
matrix constraint replaced by the trace constraint (or the problem
(\ref{eq:optLemma}) with $\mu\geq 1$) for the special case
$\textrm{Cov}(\Ubf_i)=\sigma_i^2\Ibf$, $i=1,2$, has the solution
$\textrm{Cov}(\Xbf)=P^*\Ibf$, where \bqa
P^*=\left\{\begin{array}{ll}
  0, &\quad 0<\sigma_2^2<\mu\sigma_1^2 \\
  \frac{\sigma_2^2-\mu\sigma_1^2}{\mu-1}, &\quad \mu\sigma_1^2\leq\sigma_2^2<\mu\sigma_1^2+(\mu-1)P \\
  P, &\quad\sigma_2^2\geq\mu\sigma_1^2+(\mu-1)P\\
\end{array}\right.\label{eq:optP_mu1}\eqa Alternatively, we can
write (\ref{eq:optP_mu1}) as \bqa P^*=\left\{\begin{array}{ll}
  P, &\quad 0<\sigma_1^2\leq\left(\frac{\sigma_2^2}{\mu}-\frac{\mu-1}{\mu}P\right)^+ \\
  \frac{\sigma_2^2-\mu\sigma_1^2}{\mu-1}, & \quad\left(\frac{\sigma_2^2}{\mu}-\frac{\mu-1}{\mu}P\right)^+<\sigma_1^2\leq\frac{\sigma_2^2}{\mu} \\
  0, &\quad \sigma_1^2>\frac{\sigma_2^2}{\mu} \\
\end{array}\right.\label{eq:opt_mu1P_eq}\eqa\label{corollary:c1}
\end{corollary}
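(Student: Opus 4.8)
The plan is to reduce the problem to a finite-dimensional optimization over the eigenvalues of $\textrm{Cov}(\Xbf)$ and then solve it by a short case analysis. By the discussion preceding the corollary, Lemma \ref{lemma:lemma1} guarantees that a Gaussian $\Xbf$ is optimal for problem (\ref{eq:optLemma}) when $\mu\geq 1$, so I may restrict attention to Gaussian $\Xbf$ with covariance $\Sbf=\textrm{Cov}(\Xbf)$. For such $\Xbf$ the objective equals $\frac{1}{2}\log\det(\Sbf+\sigma_1^2\Ibf)-\frac{\mu}{2}\log\det(\Sbf+\sigma_2^2\Ibf)$ up to an additive constant. Since the noise covariances $\sigma_i^2\Ibf$ are isotropic, both determinants depend only on the eigenvalues $\lambda_1,\dots,\lambda_n\geq 0$ of $\Sbf$, and the trace constraint reads $\sum_j\lambda_j\leq nP$. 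Hence the problem separates into maximizing $\sum_j f(\lambda_j)$ subject to $\sum_j\lambda_j\leq nP$ and $\lambda_j\geq 0$, where $f(\lambda)=\frac{1}{2}\log(\lambda+\sigma_1^2)-\frac{\mu}{2}\log(\lambda+\sigma_2^2)$. Showing that the optimal eigenvalues all equal the claimed $P^*$ then yields $\textrm{Cov}(\Xbf)=P^*\Ibf$.

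The next step is to understand the scalar function $f$. A direct computation gives $f'(\lambda)=\frac{1}{2}\big(\frac{1}{\lambda+\sigma_1^2}-\frac{\mu}{\lambda+\sigma_2^2}\big)$, whose only stationary point is $\lambda^*=\frac{\sigma_2^2-\mu\sigma_1^2}{\mu-1}$, and since $f'(\lambda)<0$ for large $\lambda$ (as $\mu>1$), $f$ is strictly increasing on $[0,\lambda^*]$ and strictly decreasing on $[\lambda^*,\infty)$ whenever $\lambda^*\geq 0$ (the degenerate case $\mu=1$, where the middle region is empty, follows the same lines). This disposes of two of the three cases. When $\sigma_2^2<\mu\sigma_1^2$ we have $\lambda^*<0$, so $f$ is decreasing on $[0,\infty)$ and the separable sum is maximized at $\lambda_j=0$ for all $j$, giving $P^*=0$. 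When $\mu\sigma_1^2\leq\sigma_2^2<\mu\sigma_1^2+(\mu-1)P$ we have $0\leq\lambda^*<P$, so each summand is individually maximized at its unconstrained maximizer $\lambda^*$; setting all $\lambda_j=\lambda^*$ uses total power $n\lambda^*<nP$, is feasible, and simultaneously maximizes every term, giving $P^*=\lambda^*$. In both cases the trace constraint is inactive and the eigenvalues do not couple.

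The remaining and most delicate case is $\sigma_2^2\geq\mu\sigma_1^2+(\mu-1)P$, equivalently $\lambda^*\geq P$, where the trace constraint is active and the eigenvalues genuinely interact. Since $f$ is decreasing beyond $\lambda^*$ while the budget $nP\leq n\lambda^*$ cannot fill all coordinates up to $\lambda^*$, at any optimum every $\lambda_j$ may be taken in $[0,\lambda^*]$: any coordinate exceeding $\lambda^*$ can be lowered toward $\lambda^*$, which both increases its own term and frees power that can be profitably moved to a coordinate still below $\lambda^*$. The key fact is that $f$ is \emph{concave} on $[0,\lambda^*]$ in this regime: examining the sign of $f''$ shows $f$ is concave exactly on $[0,\lambda_c]$ with $\lambda_c=\frac{\sigma_2^2-\sqrt{\mu}\sigma_1^2}{\sqrt{\mu}-1}$, and a short computation gives $\lambda_c-\lambda^*=\frac{\sqrt{\mu}(\sigma_2^2-\sigma_1^2)}{\mu-1}>0$ because $\sigma_2^2\geq\mu\sigma_1^2+(\mu-1)P>\sigma_1^2$. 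Thus all $\lambda_j$ lie in the concave interval $[0,\lambda^*]$, and Jensen's inequality gives $\sum_j f(\lambda_j)\leq n\,f\big(\frac{1}{n}\sum_j\lambda_j\big)\leq n\,f(P)$, the last step using monotonicity of $f$ on $[0,\lambda^*]$ to justify spending the full budget. Equality holds at $\lambda_j=P$ for all $j$, so $P^*=P$. I expect this concavity-on-the-active-interval step, rather than any boundary computation, to be the crux, since $f$ is not globally concave and the argument hinges on the inequality $\lambda_c>\lambda^*$.

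Finally, I would record that the three thresholds $\sigma_2^2<\mu\sigma_1^2$, $\mu\sigma_1^2\leq\sigma_2^2<\mu\sigma_1^2+(\mu-1)P$, and $\sigma_2^2\geq\mu\sigma_1^2+(\mu-1)P$ are exactly those in (\ref{eq:optP_mu1}). The equivalent form (\ref{eq:opt_mu1P_eq}) then follows by rewriting each threshold as a condition on $\sigma_1^2$ for fixed $\sigma_2^2$: for instance $\sigma_2^2\geq\mu\sigma_1^2+(\mu-1)P$ becomes $\sigma_1^2\leq\frac{\sigma_2^2}{\mu}-\frac{\mu-1}{\mu}P$ and $\sigma_2^2<\mu\sigma_1^2$ becomes $\sigma_1^2>\frac{\sigma_2^2}{\mu}$, with the cutoff $(\cdot)^+$ accounting for $P\geq 0$. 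This is a routine rearrangement that completes the proof.
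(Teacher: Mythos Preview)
Your proof is correct and follows the same overall structure as the paper's: reduce to Gaussian $\Xbf$ via Lemma~\ref{lemma:lemma1}, pass to the eigenvalues of $\Sbf$ using the isotropy of the noise covariances, and solve the resulting separable scalar problem. The only difference is in this last step. The paper dispatches it in one line by invoking ``the Lagrangian'' with the constraint $\sum_i\lambda_i=nP$, whereas you carry out a detailed case analysis based on the monotonicity of $f$ and its concavity on $[0,\lambda_c]$. Your treatment is in fact the more careful one: since $f$ is not globally concave, the stationarity condition $f'(\lambda_i)=\nu$ alone does not immediately certify a maximum, and your inequality $\lambda_c>\lambda^*$ (so Jensen applies on the interval where all coordinates may be assumed to lie) supplies exactly the missing sufficiency argument that the paper's Lagrangian sketch leaves implicit.
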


\begin{corollary}
The optimization problem of Lemma \ref{lemma:lemma2} with the
matrix constraint replaced by the trace constraint (or the problem
of (\ref{eq:optLemma}) with $\mu<1$ and
$\sigma_1^2\leq\sigma_2^2$) for the special case
$\textrm{Cov}(\Ubf_1)=\sigma_1^2\Ibf$,
$\textrm{Cov}(\Ubf)=\left(\sigma_2^2-\sigma_1^2\right)\Ibf$, where
$\sigma_1^2\leq\sigma_2^2$, has the solution
$\textrm{Cov}(\Xbf)=P^*\Ibf$ where \bqa
P^*=P.\label{eq:optP_mu2}\eqa\label{corollary:c2}
\end{corollary}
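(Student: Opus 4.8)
The plan is to reduce the trace-constrained problem to a scalar optimization over the eigenvalues of the covariance of $\Xbf$. From the discussion preceding Corollary~\ref{corollary:c1}, a Gaussian $\Xbf$ is optimal for the trace-constrained version of the problem in Lemma~\ref{lemma:lemma2}, so I would restrict attention to Gaussian $\Xbf$ with covariance $\Sbf\succeq 0$ obeying $\frac{1}{n}\textrm{tr}(\Sbf)\leq P$. Since $\textrm{Cov}(\Ubf_1)=\sigma_1^2\Ibf$ and $\textrm{Cov}(\Ubf_1+\Ubf)=\sigma_2^2\Ibf$, the objective equals, up to a constant independent of $\Sbf$,
\bqn
h\left(\Xbf+\Ubf_1\right)-\mu h\left(\Xbf+\Ubf_1+\Ubf\right)=\frac{1}{2}\log\det\left(\Sbf+\sigma_1^2\Ibf\right)-\frac{\mu}{2}\log\det\left(\Sbf+\sigma_2^2\Ibf\right).
\eqn
Because both noise covariances are scalar multiples of $\Ibf$ and the trace constraint is rotation invariant, this expression depends on $\Sbf$ only through its eigenvalues $\lambda_1,\dots,\lambda_n\geq 0$. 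Hence the problem becomes maximizing $\sum_{i=1}^n g(\lambda_i)$ subject to $\sum_{i=1}^n\lambda_i\leq nP$, where $g(\lambda)=\frac{1}{2}\log(\lambda+\sigma_1^2)-\frac{\mu}{2}\log(\lambda+\sigma_2^2)$.

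Next I would analyze the single-letter function $g$ on $[0,\infty)$. Differentiating,
\bqn
g'(\lambda)=\frac{1}{2}\cdot\frac{(1-\mu)\lambda+\sigma_2^2-\mu\sigma_1^2}{\left(\lambda+\sigma_1^2\right)\left(\lambda+\sigma_2^2\right)}.
\eqn
Since $\mu<1$ and $\sigma_2^2\geq\sigma_1^2>0$, the numerator is strictly positive for every $\lambda\geq 0$, so $g$ is strictly increasing; in particular the power constraint must be active, $\sum_i\lambda_i=nP$, at any optimum. A second differentiation gives $g''(\lambda)=\frac{1}{2}\left[\mu\left(\lambda+\sigma_2^2\right)^{-2}-\left(\lambda+\sigma_1^2\right)^{-2}\right]$, which is strictly negative because $\mu<1$ and $\sigma_2^2\geq\sigma_1^2$ together force $\mu(\lambda+\sigma_1^2)^2<(\lambda+\sigma_2^2)^2$. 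Thus $g$ is strictly concave on $[0,\infty)$.

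With $g$ both strictly increasing and strictly concave, the conclusion follows quickly: for any fixed total power $\sum_i\lambda_i=S$, Jensen's inequality yields $\sum_i g(\lambda_i)\leq n\,g(S/n)$ with equality if and only if all $\lambda_i$ are equal, and monotonicity of $g$ then forces $S=nP$. Hence the unique optimizer is $\lambda_1=\cdots=\lambda_n=P$, i.e.\ $\textrm{Cov}(\Xbf)=P\Ibf$ and $P^*=P$, as claimed.

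The argument is largely routine; the only point needing care is verifying that $g$ is simultaneously increasing and concave on all of $[0,\infty)$. This is exactly where the hypotheses $\mu<1$ and $\sigma_1^2\leq\sigma_2^2$ are used, and it is the feature that separates this case from Corollary~\ref{corollary:c1}: when $\mu\geq 1$ the sign of $g'$ can flip, producing the three-regime water-filling solution (\ref{eq:optP_mu1}), whereas for $\mu<1$ monotonicity never fails and full, uniform power is always optimal.
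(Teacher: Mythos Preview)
Your proof is correct and follows essentially the same route as the paper: reduce to Gaussian $\Xbf$, diagonalize so the objective depends only on the eigenvalues $\lambda_i$ of $\Sbf$, and then solve the resulting separable scalar problem under the trace constraint. The only cosmetic difference is that the paper dispatches the scalar step by appealing to the Lagrangian, while you argue directly that $g$ is strictly increasing and strictly concave and then apply Jensen's inequality; both amount to the same first-order analysis of $g$ and yield $\lambda_i^*=P$.
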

\vspace{-.2in}
\begin{proof}Suppose the eigenvalue
decomposition of $\Sbf$ is $\Sbf=\Qbf\Lambda\Qbf^T$ and
$\Lambdabf=\textrm{diag}(\lambda_1,\dots,\lambda_n)$. Since
Gaussian $\Xbf$ is optimal, we have \bqn
&&h\left(\Xbf+\Ubf_1\right)-\mu
h\left(\Xbf+\Ubf_2\right)\\
&&=\frac{1}{2}\log\left[(2\pi e)^n\left|\Sbf+\sigma_1^2\Ibf\right|\right]-\frac{\mu}{2}\log\left[(2\pi e)^n\left|\Sbf+\sigma_2^2\Ibf\right|\right]\\
&&=\frac{1}{2}\log\left|\Lambdabf+\sigma_1^2\Ibf\right|-\frac{\mu}{2}\log\left|\Lambdabf+\sigma_2^2\Ibf\right|+\frac{1-\mu}{2}\log(2\pi e)^n\\
&&=\frac{1}{2}\sum_{i=1}^n\log\left(\lambda_i+\sigma_1^2\right)-\frac{\mu}{2}\sum_{i=1}^n\log\left(\lambda_i+\sigma_2^2\right)+\frac{1-\mu}{2}\log(2\pi e)^n\\
&&\triangleq f(\Lambdabf)\eqn By using the Lagrangian of
$f(\Lambdabf)$ with the constraint $\sum_{i=1}^n\lambda_i=nP$, it
can be shown that the optimal $\lambda_i$ is $\lambda_i^*=P^*$
with $P^*$ defined in (\ref{eq:optP_mu1})-(\ref{eq:optP_mu2}).
\end{proof}

 Finally we need another lemma to prove our main results.
\begin{lemma}
Suppose that $(U,V)$ is Gaussian with covariance matrix $\left[%
\begin{array}{cc}
  \sigma_1^2 & \rho\sigma_1\sigma_2 \\
  \rho\sigma_1\sigma_2 & \sigma_2^2 \\
\end{array}%
\right]$, $\sigma_1>0$, $\sigma_2>0$, $|\rho|<1$, and $W$ is
Gaussian with variance $\left(1-\rho^2\right)\sigma_1^2$. If the
discrete or continuous random variable $X$ is independent of
$(U,V)$ and $X$ is independent of $W$, then we have  \bqa
h\left(X+U|V\right)=h\left(X+W\right)\eqa
\label{lemma:lemma4}\end{lemma}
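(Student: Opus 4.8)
The plan is to exploit the defining property of jointly Gaussian vectors: the conditional law of $U$ given $V=v$ is Gaussian with a variance that does \emph{not} depend on $v$. Concretely, I would first introduce the minimum mean-square error decomposition $U = \rho\frac{\sigma_1}{\sigma_2}V + \tilde W$, where the innovation is defined by $\tilde W \triangleq U - \rho\frac{\sigma_1}{\sigma_2}V$. Since $(\tilde W, V)$ is a linear transformation of the Gaussian pair $(U,V)$, it is itself jointly Gaussian, and a one-line covariance computation gives $\mathrm{Cov}(\tilde W, V)=\rho\sigma_1\sigma_2-\rho\frac{\sigma_1}{\sigma_2}\sigma_2^2=0$; hence $\tilde W$ is independent of $V$, and its variance is $\sigma_1^2(1-\rho^2)$, which is exactly the variance of $W$.

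The second step is to transfer the independence hypotheses through this decomposition. Because $X$ is independent of $(U,V)$ and $(\tilde W, V)$ is a deterministic function of $(U,V)$, I obtain $X \perp (\tilde W, V)$. From this joint independence I would deduce the two facts needed after conditioning on $V=v$: first, that the conditional distribution of $X$ given $V=v$ equals its marginal; and second, that conditionally on $V=v$ the variables $X$ and $\tilde W$ are independent, with $\tilde W$ still Gaussian of variance $\sigma_1^2(1-\rho^2)$ (here invoking $\tilde W \perp V$). This is the step that requires genuine care, since $X$ is permitted to be discrete and non-Gaussian, so I must argue at the level of distributions or densities, e.g.\ via $p(x,\tilde w\mid v)=p(x)\,p(\tilde w\mid v)=p(x)\,p(\tilde w)$, rather than by manipulating closed-form entropy expressions.

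Third, I would evaluate the conditional entropy for a fixed $v$. Writing the conditioned variable as $X + U \mid_{V=v} = X + \rho\frac{\sigma_1}{\sigma_2}v + \tilde W$ and using translation invariance of differential entropy to discard the deterministic shift $\rho\frac{\sigma_1}{\sigma_2}v$, the conditional law of $X+U$ given $V=v$ coincides with the law of $X + \tilde W$, where $X$ and $\tilde W$ are independent and $\tilde W \sim \mathcal{N}(0,\sigma_1^2(1-\rho^2))$. Since $W$ has the same Gaussian distribution and the same independence from $X$, this is precisely the law of $X+W$, so $h(X+U\mid V=v)=h(X+W)$ for every $v$. Averaging over $v$, with the right-hand side constant in $v$, then yields $h(X+U\mid V)=h(X+W)$.

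The main obstacle I anticipate is not any single computation but keeping the conditional-independence bookkeeping rigorous for a general $X$: the entire claim hinges on the fact that, conditioned on $V=v$, the law of $X+U$ is a shift of the convolution of a \emph{fixed} Gaussian kernel with the \emph{fixed} law of $X$, so that neither the Gaussian variance nor the law of $X$ drifts with $v$. Everything else --- the covariance calculation establishing $\tilde W\perp V$ and the translation invariance of $h(\cdot)$ --- is routine.
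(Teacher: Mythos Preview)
Your proposal is correct and follows essentially the same route as the paper: decompose $U$ as $\rho\frac{\sigma_1}{\sigma_2}V$ plus a Gaussian innovation of variance $(1-\rho^2)\sigma_1^2$ independent of $V$, condition on $V=v$, drop the deterministic shift by translation invariance, and identify the result with $h(X+W)$. The only cosmetic difference is that the paper introduces a fresh variable $W'$ independent of $(U,V)$ and argues that $(W'+\rho\frac{\sigma_1}{\sigma_2}V,V)$ has the same joint law as $(U,V)$, whereas you define $\tilde W=U-\rho\frac{\sigma_1}{\sigma_2}V$ directly from $(U,V)$; your version makes the independence $X\perp(\tilde W,V)$ an immediate consequence of $X\perp(U,V)$, which is arguably tidier.
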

\begin{proof} We have\bqn h\left(X+U|V\right)&=&\int
f_V\left(v\right)h\left(X+U\left|V=v\right.\right)dv\\
&\stackrel{(a)}=&\int
f_V\left(v\right)h\left(\left.X+W^\prime+\frac{\rho\sigma_1}{\sigma_2}
V\right|V=v\right)dv\\
&\stackrel{(b)}=&\int f_V\left(v\right)h\left(X+W^\prime\right)dv\\
&=&h\left(X+W^\prime\right)\\
&=&h\left(X+W\right)\eqn where $W^\prime$ is identically
distributed as $W$ but independent of $(U,V)$. $(a)$ follows
because $\left(W^\prime+\frac{\rho\sigma_1}{\sigma_2}V,V\right)$
has the same joint distribution as $(U,V)$, $(b)$ follows because
$\frac{\rho\sigma_1}{\sigma_2}V$ becomes a constant when
conditioned on $V=v$.
\end{proof}

Since $U|V=v$ is also Gaussian distributed with mean value
$\frac{\rho\sigma_1}{\sigma_2}v$ and variance
$\left(1-\rho^2\right)\sigma_1^2$, Lemma \ref{lemma:lemma4} shows
that $U|V$ can be replaced by an equivalent Gaussian random
variable with the same variance.

\subsection{Proof of Theorem \ref{thm:region}}
Let $N_1$ and $N_2$ be two zero-mean Gaussian variables with
variances $\sigma_1^2$ and $\sigma_2^2$ respectively, and set
$\Emat(N_1Z_1)=\rho_1\sigma_1$ and $\Emat(N_2Z_2)=\rho_2\sigma_2$.
We further define $N_1^n$ and $N_2^n$ to be Gaussian vectors with
$n$ independent and identically distributed (i.i.d.) elements
distributed as $N_1$ and $N_2$, respectively.

Starting from Fano's inequality, we have that reliable
communication requires \bqa &&n(R_1+\mu
R_2)\nn\\
&&\leq I\left(X_1^n;Y_1^n\right)+\mu
I\left(X_2^n;Y_2^n\right)+n\epsilon\nn\\
&&\leq I\left(X_1^n;Y_1^n,X_1^n+N_1^n\right)+\mu
I\left(X_2^n;Y_2^n,X_2^n+N_2^n\right)+n\epsilon\nn\\
&&=I\left(X_1^n;X_1^n+N_1^n\right)+I\left(X_1^n;Y_1^n|X_1^n+N_1^n\right)+\mu
I\left(X_2^n;X_2^n+N_2^n\right)+\mu
I\left(X_2^n;Y_2^n|X_2^n+N_2^n\right)+n\epsilon\nn\\
&&=h\left(X_1^n+N_1^n\right)-h\left(N_1^n\right)+h\left(Y_1^n|X_1^n+N_1^n\right)-h\left(\sqrt{a}X_2^n+Z_1^n|N_1^n\right)+\mu
h\left(X_2^n+Z_2^n\right)-\mu h\left(N_2^n\right)\nn\\
&&\hspace{.1in}+\mu h\left(Y_2^n|X_2^n+N_2^n\right)-\mu
h\left(\sqrt{b}X_1^n+Z_2^n|N_2^n\right)+n\epsilon\label{eq:exprate}\eqa
where $\epsilon\rightarrow 0$ as $n\rightarrow\infty$. For
$h\left(Y_1^n|X_1^n+N_1^n\right)$, zero-mean Gaussian $X_1^n$ and
$X_2^n$ are optimal, and we have \bqa
\frac{1}{n}h\left(Y_1^n|X_1^n+N_1^n\right)&\leq&\frac{1}{n}\sum_{i=1}^nh\left(Y_{1i}|X_{1i}+N_1\right)\nn\\
&=&\frac{1}{n}\sum_{i=1}^n\left(h\left(X_{1i}+\sqrt{a}X_{2i}+Z_1,X_{1i}+N_1\right)-h\left(X_{1i}+N_1\right)\right)\nn\\
&=&\frac{1}{2n}\sum_{i=1}^n\log\left[2\pi
e\left(1+aP_{2i}+P_{1i}-\frac{\left(P_{1i}+\rho_1\sigma_1\right)^2}{P_{1i}+\sigma_1^2}\right)\right]\label{eq:eq1}\eqa
where $P_{1i}=\Emat(X_{1i}^2)$ and $P_{2i}=\Emat(X_{2i}^2)$.
Consider the function
\begin{align}
f(p_1,p_2) =
1+ap_2+p_1-\frac{(p_1+\rho_1\sigma_1)^2}{p_1+\sigma_1^2}
\end{align}
for which we compute \bqa &&\frac{\partial f}{\partial p_1}=
\frac{\sigma_1^2(\sigma_1-\rho_1)^2}{(p_1+\sigma_1^2)^2} \ge 0\label{eq:1order} \\
&&\frac{\partial f}{\partial p_2}=1 \\
&&\frac{\partial^2 f}{\partial p_1^2}=
-\frac{2\sigma_1^2(\sigma_1-\rho_1)^2}{(p_1+\sigma_1^2)^3} \le 0 \label{eq:2order}\\
&&\frac{\partial^2 f}{\partial p_2^2}=0 \\
 &&\frac{\partial^2
f}{\partial p_1 \partial p_1}=0. \eqa Since $\log(x)$ is concave
in $x$ we have that the logarithm in (\ref{eq:eq1}) is
concave in $(P_{1i},P_{2i})$. We thus have \bqa \frac{1}{n}h\left(Y_1^n|X_1^n+N_1^n\right)&\leq&\frac{1}{2}\log\left[2\pi e\left(1+\frac{a}{n}\sum_{i=1}^nP_{2i}+\frac{1}{n}\sum_{i=1}^nP_{1i}-\frac{\left(\frac{1}{n}\sum_{i=1}^nP_{1i}+\rho_1\sigma_1\right)^2}{\frac{1}{n}\sum_{i=1}^nP_{1i}+\sigma_1^2}\right)\right]\nn\\
&\leq&\frac{1}{2}\log\left[2\pi
e\left(1+aP_{2}+P_{1}-\frac{\left(P_{1}+\rho_1\sigma_1\right)^2}{P_{1}+\sigma_1^2}\right)\right]\label{eq:conditional1}\eqa
where the first inequality follows from Jensen's inequality, and
the second inequality follows from the block power constraints
$\frac{1}{n}\sum_{j=1}^nP_{ij}\leq P_i$, $i=1,2$, and
(\ref{eq:1order}).

For the same reason, we have\bqa
\frac{1}{n}h\left(Y_2^n|X_2^n+N_2^n\right)\leq\frac{1}{2}\log\left[2\pi
e\left(1+bP_{1}+P_{2}-\frac{\left(P_{2}+\rho_2\sigma_2\right)^2}{P_{2}+\sigma_2^2}\right)\right].\label{eq:conditional2}\eqa
Let $W_2^\prime=Z_2|N_2$, then $W_2^\prime$ is Gaussian
distributed with variance $1-\rho_2^2$. Define a new Gaussian
variable $W_2$ with variance $1-\rho_2^2$. From Lemma
\ref{lemma:lemma4} and Corollaries \ref{corollary:c1} and
\ref{corollary:c2} we have \bqa &&h\left(X_1^n+N_1^n\right)-\mu
h\left(\sqrt{b}X_1^n+Z_2^n|N_2^n\right)\nn\\
&&=h\left(X_1^n+N_1^n\right)-\mu
h\left(\sqrt{b}X_1^n+W_2^n\right)\nn\\
&&=h\left(X_1^n+N_1^n\right)-\mu
h\left(X_1^n+\frac{W_2^n}{\sqrt{b}}\right)-\frac{n\mu}{2}\log b\nn\\
&&\leq\frac{n}{2}\log\left[2\pi
e\left(P_1^*+\sigma_1^2\right)\right]-\frac{n\mu}{2}\log\left[2\pi
e\left(bP_1^*+1-\rho_2^2\right)\right],\label{eq:sum1}\eqa where
$P_1^*$ is defined in (\ref{eq:P1muL}) and (\ref{eq:P1muS}). For
the same reason, we have \bqa \mu h\left(X_2^n+Z_2^n\right)-
h\left(\sqrt{a}X_2^n+Z_1^n|N_1^n\right)\leq\frac{n\mu}{2}\log\left[2\pi
e\left(P_2^*+\sigma_2^2\right)\right]-\frac{n}{2}\log\left[2\pi
e\left(aP_2+1-\rho_1^2\right)\right],\label{eq:sum2}\eqa where
$P_2^*$ is defined in (\ref{eq:P2muL}) and (\ref{eq:P2muS}). From
(\ref{eq:exprate}), (\ref{eq:conditional1})-(\ref{eq:sum2}) we
obtain the rate constraint (\ref{eq:constraint1}).

On the other hand, we have \bqa
n(R_1+\eta_1R_2)&\leq&I\left(X_1^n;Y_1^n\right)+\eta_1I\left(X_2^n;Y_2^n\right)+n\epsilon\nn\\
&\leq&I\left(X_1^n;Y_1^n,X_2^n\right)+\eta_1I\left(X_2^n;Y_2^n\right)+n\epsilon\nn\\
&=&I\left(X_1^n;Y_1^n|X_2^n\right)+\eta_1I\left(X_2^n;Y_2^n\right)+n\epsilon\nn\\
&=&h\left(Y_1^n|X_2^n\right)-h\left(Y_1^n|X_1^n,X_2^n\right)+\eta_1h\left(Y_2^n\right)-\eta_1h\left(Y_2^n|X_2^n\right)+n\epsilon\nn\\
&=&h\left(X_1^n+Z_1^n\right)-\eta_1h\left(\sqrt{b}X_1^n+Z_2^n\right)-h\left(Z_1^n\right)+\eta_1h\left(Y_2^n\right)+n\epsilon\nn\\
&\leq&\frac{n}{2}\log\left(P_1^*+1\right)-\frac{n\eta_1}{2}\log\left(bP_1^*+1\right)+\frac{n\eta_1}{2}\log\left(1+bP_1+P_2\right)+n\epsilon,\label{eq:BCbound}\eqa
where the last step follows by Corollaries \ref{corollary:c1} and
\ref{corollary:c2}. We further have \bqa
P_1^*=\left\{\begin{array}{ll}
  P_1, & \quad \eta_1\leq\frac{1+bP_1}{b+bP_1} \\
  \frac{b\eta_1-1}{b-b\eta_1}, & \quad \frac{1+bP_1}{b+bP_1}\leq\eta_1\leq\frac{1}{b}\\
  0, & \quad \eta_1\geq\frac{1}{b}. \\
\end{array}\right.\label{eq:starP1}\eqa
Since the bounds in (\ref{eq:BCbound}) when $P_1^*=P_1$ and
$P_1^*=0$ are redundant, we have \bqa
R_1+\eta_1R_2&\leq&\frac{1}{2}\log\left(1+\frac{b\eta_1-1}{b-b\eta_1}\right)-\frac{\eta_1}{2}\log\left(1+\frac{b\eta_1-1}{1-\eta_1}\right)+\frac{\eta_1}{2}\log\left(1+bP_1+P_2\right),\eqa
for $\frac{1+bP_1}{b+bP_1}\leq\eta_1\leq\frac{1}{b}$, which is
(\ref{eq:constraint2}). We similarly obtain
(\ref{eq:constraint3}).

\subsection{Proof of Theorem \ref{thm:sumcapacity}}
By choosing \bqa
&&\hspace{-.35in}\sigma_1^2=\frac{1}{2b}\left\{b(aP_2+1)^2-a(bP_1+1)^2+1+\sqrt{\left[b(aP_2+1)^2-a(bP_1+1)^2+1\right]^2-4b(aP_2+1)^2}\right\}\label{eq:condition1}\\
&&\hspace{-.35in}\sigma_2^2=\frac{1}{2a}\left\{a(bP_1+1)^2-b(aP_2+1)^2+1+\sqrt{\left[a(bP_1+1)^2-b(aP_2+1)^2+1\right]^2-4a(bP_1\emph{}+1)^2}\right\}\\
&&\hspace{-.35in}\rho_1=\sqrt{1-a\sigma_2^2}\\
&&\hspace{-.35in}\rho_2=\sqrt{1-b\sigma_1^2},
\label{eq:condition4}\eqa the bound (\ref{eq:constraint1}) with
$\mu=1$ is \bqa
R_1+R_2\leq\frac{1}{2}\log\left(1+\frac{P_1}{1+aP_2}\right)+\frac{1}{2}\log\left(1+\frac{P_2}{1+bP_1}\right).\label{eq:upperlower}\eqa
By one can achieve equality in (\ref{eq:upperlower}) by treating
the interference as noise at both receivers.

In order that the choice of $\sigma_1^2$, $\sigma_2^2$, $\rho_1$
and $\rho_2$ be feasible, there must exist at least one pair
$(\sigma_1^2,\sigma_2^2)$ satisfying the following conditions:
\bqn \sigma_1^2\geq 0,\quad\sigma_2^2 \geq 0,\quad \rho_1\leq
1,\quad \rho_2\leq 1.\eqn Using
(\ref{eq:condition1})-(\ref{eq:condition4}), we thus require \bqa
\left[b(aP_2+1)^2-a(bP_1+1)^2+1\right]^2-4b(aP_2+1)^2&\geq&
0\label{eq:sigma1}\\
\left[a(bP_1+1)^2-b(aP_2+1)^2+1\right]^2-4a(bP_1\emph{}+1)^2&\geq&
0\label{eq:sigma2}\\
b(aP_2+1)^2-a(bP_1+1)^2+1&\geq&0\label{eq:rho1}\\
a(bP_1+1)^2-b(aP_2+1)^2+1&\geq&0.\label{eq:rho2}\eqa From
(\ref{eq:sigma1}) we have one of the following three
conditions\bqa \sqrt{b}\left(aP_2+1\right)-\sqrt{a}(bP_1+1)&\geq&
1,\label{eq:ineq1}\\
\sqrt{b}\left(aP_2+1\right)-\sqrt{a}(bP_1+1)&\leq&
-1,\label{eq:ineq2}\\
\sqrt{b}\left(aP_2+1\right)+\sqrt{a}(bP_1+1)&\leq&
1.\label{eq:ineq3}\eqa (\ref{eq:sigma2}) gives the same
constraints in (\ref{eq:ineq1})-(\ref{eq:ineq3}). Since
(\ref{eq:rho1}) and (\ref{eq:rho2}) exclude the possibilities
(\ref{eq:ineq1}) and (\ref{eq:ineq2}), this leaves
(\ref{eq:ineq3}) which is precisely (\ref{eq:power}) in Theorem
\ref{thm:sumcapacity}.

\subsection{Proof of Theorem \ref{thm:sumcapacity_mixed}}
The proof of (\ref{eq:constraint2}) requires only $0<b<1$.
Therefore (\ref{eq:constraint2}) is still valid when $a>1$.
Letting $\eta_1=1$ and $P_1^*=P_1$ in (\ref{eq:BCbound}) and
(\ref{eq:starP1}), we have the sum-rate capacity upper bound in
(\ref{eq:sumcapacity_mixed}). But (\ref{eq:sumcapacity_mixed}) is
 achievable if (\ref{eq:icmixed}) is true.  To verify this, we
let user $2$ communicate at
$R_2=\frac{1}{2}\log\left(1+\frac{P_2}{1+bP_1}\right)$. From
(\ref{eq:icmixed}), user $1$ can decode the message from user $2$
before decoding its own messages. Then we obtained
(\ref{eq:constraint_mixed}) and Theorem
\ref{thm:sumcapacity_mixed} is proved.

\section{Numerical examples}
A comparison of the outer bounds for a Gaussian IC is given in
Fig. \ref{fig:region}. Some part of the outer bound from Theorem
\ref{thm:region} overlaps with Kramer's outer bound due to
(\ref{eq:constraint2}) and (\ref{eq:constraint3}). Since this IC
has noisy interference, the proposed outer bound coincides with
the inner bound at the sum rate point.

The lower and upper bounds for the sum-rate capacity of the
symmetric IC($a=b,P_1=P_2$) are shown in Figs.
\ref{fig:sumVeryWeak}-\ref{fig:sumVeryStrongLog} for different
power levels. For all of these cases, the upper bounds are tight
up to point $A$. The bound in \cite[Theorem
3]{Etkin-etal:07IT_submission} approaches to the bound in Theorem
\ref{thm:region} when the power becomes large, but there is still
a gap. Fig. \ref{fig:sumVeryStrong} and \ref{fig:sumVeryStrongLog}
also provide a definitive answer to a question from \cite[Fig.
2]{Sason:04IT}: whether the sum-rate capacity of symmetric
Gaussian IC is a decreasing function of $a$, or there exists a
bump like the lower bound when $a$ varies from $0$ to $1$. In Fig.
\ref{fig:sumVeryStrong} and \ref{fig:sumVeryStrongLog}, our
proposed upper bound and Sason's inner bound explicitly show that
the sum capacity is not a monotone function of $a$ (this result
also follows by the bounds of \cite{Etkin-etal:07IT_submission}).

\section{Conclusions and extensions}

We derived an outer bound for the capacity region of Gaussian ICs
by a genie-aided method. From this outer bound, the sum-rate
capacities for ICs that satisfy  (\ref{eq:power}) or
(\ref{eq:constraint_mixed}) are obtained.

We discuss in the following some possible extensions of the
present work. One extension is already given in Remark 6 above.
Another extension is to generalize the sum-rate capacity for a
single noisy interference IC to that of parallel ICs, that occur
in, for instance, orthogonal frequency division multiplexing
(OFDM) systems. Finally, we note that the methods used in the
paper can also be applied to obtained bounds for multiple input
multiple output Gaussian ICs. We are currently developing such
bounds.

\vspace{.2in}

 \centerline{Acknowledgement} The work of X. Shang and B. Chen
was supported in part by the National Science Foundation under
Grants 0546491 and 0501534.

G. Kramer gratefully acknowledges the support of the Board of
Trustees of the University of Illinois Subaward no. 04-217 under
National Science Foundation Grant CCR-0325673 and the Army
Research Office under ARO Grant W911NF-06-1-0182.

\begin{figure}[htp]
\centerline{\leavevmode \epsfxsize=5in \epsfysize=3.4in
\epsfbox{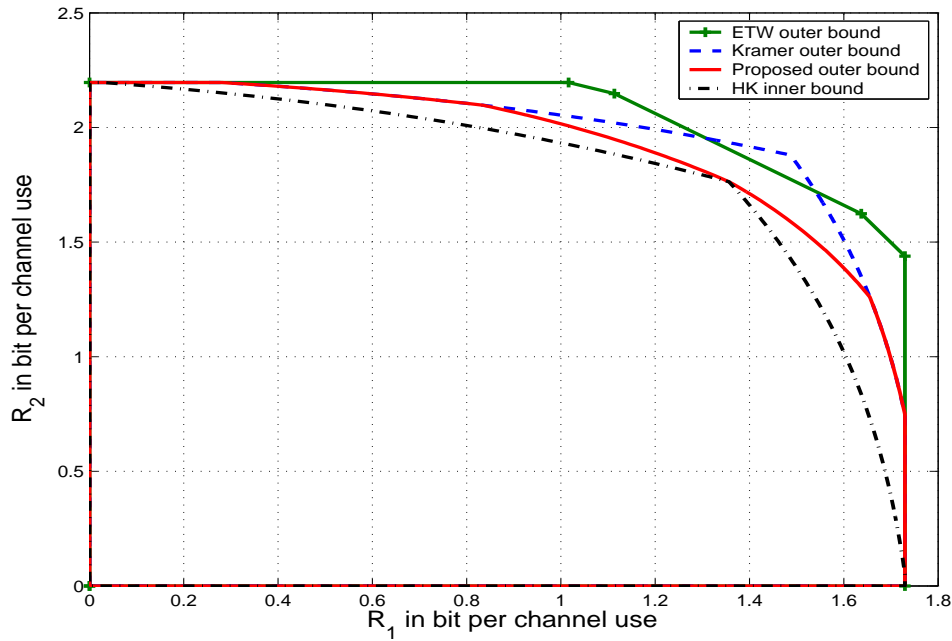}}\caption{Inner and outer bounds for the
capacity region of Gaussian ICs with
$a=0.09,b=0.04,P_1=10,P_2=20$. The ETW bound is by Etkin, Tse and
Wang in \protect\cite[Theorem 3]{Etkin-etal:07IT_submission}; the
Kramer bound is from \protect\cite[Theorem 2]{Kramer:04IT}; the HK
inner bound is based on \protect\cite{Han&Kobayashi:81IT} by Han
and Kobayashi.} \label{fig:region}\end{figure}

\begin{figure}[htp]
\centerline{\leavevmode \epsfxsize=5in \epsfysize=3.4in
\epsfbox{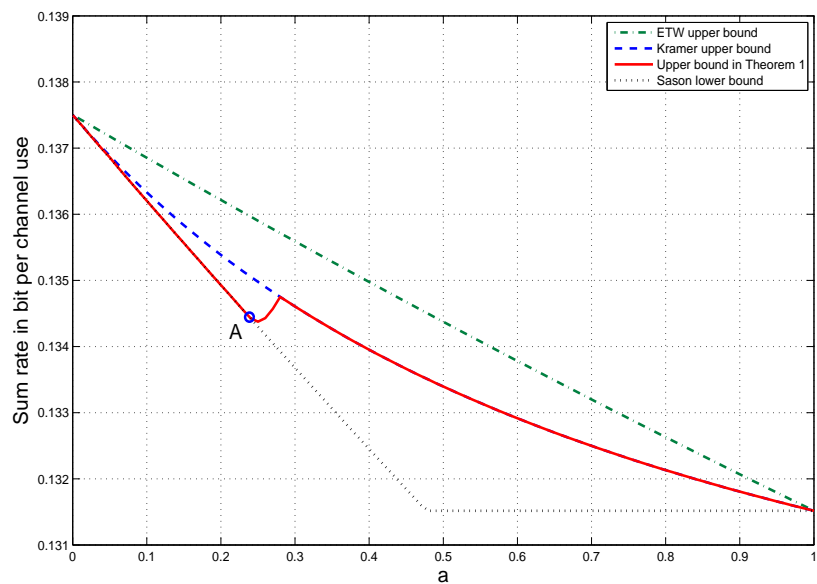}}\caption{Lower and upper bounds for the
sum-rate capacity of symmetric Gaussian ICs with
$a=b,P_1=P_2=0.1$. Sason's bound is an inner bound obtained from
Han and Kobayashi's bound by a special time sharing scheme
\protect\cite[Table I]{Sason:04IT}. The channel gain at point $A$
is $a=0.2385$.} \label{fig:sumVeryWeak}\end{figure}

\begin{figure}[htp]
\centerline{\leavevmode \epsfxsize=5in \epsfysize=3.4in
\epsfbox{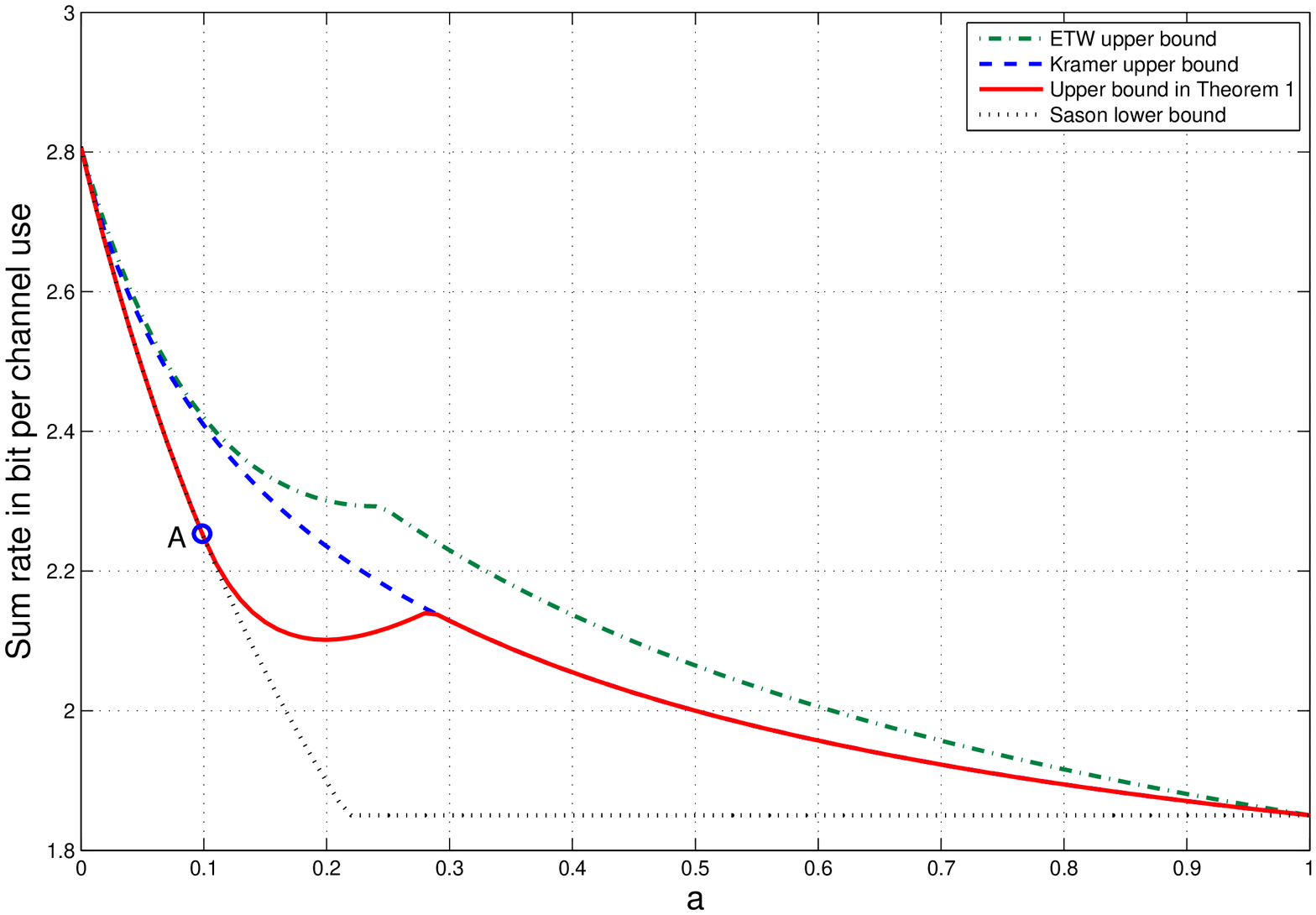}}\caption{Lower and upper bounds for the
sum-rate capacity of symmetric Gaussian ICs with $a=b,P_1=P_2=6$.
The channel gain at point $A$ is $a=0.0987$.}
\label{fig:sumMedian}\end{figure}

\begin{figure}[htp]
\centerline{\leavevmode \epsfxsize=5in \epsfysize=3.4in
\epsfbox{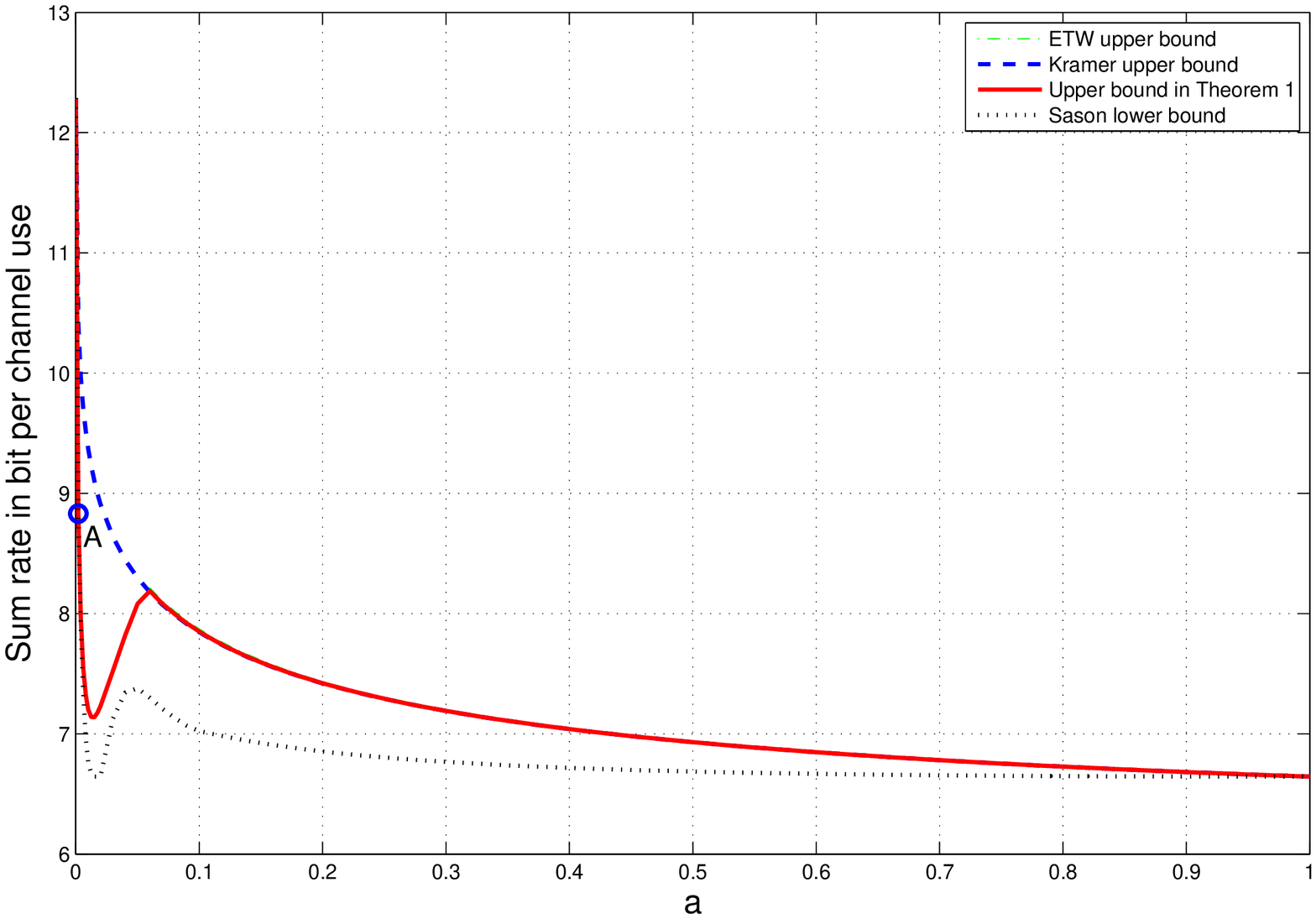}}\caption{Lower and upper bounds for the
sum-rate capacity of symmetric Gaussian ICs with
$a=b,P_1=P_2=5000$. The channel gain at point $A$ is
$a=0.002$.} \label{fig:sumVeryStrong}\end{figure}

\begin{figure}[htp]
\centerline{\leavevmode \epsfxsize=5in \epsfysize=3.4in
\epsfbox{}}\caption{Same as Fig.
\ref{fig:sumVeryStrong} with $a$ replaced by $10\log_{10}a$. The
channel gain at point $A$ is $a=-26.99$dB.}
\label{fig:sumVeryStrongLog}\end{figure}


\bibliography{\HOME/Journal,\HOME/Conf,\HOME/Misc,\HOME/Book}
\bibliographystyle{\HOME/IEEEbib}

\end{document}